\definecolor{boxshade}{gray}{0.85}
  \DeclareSymbolFont{tlaitalics}{\encodingdefault}{cmr}{m}{it}
  \let\itfam\symtlaitalics
\newtheorem{definition}{Definition}
\newtheorem{proposition}{Proposition}
\newtheorem{corollary}{Corollary}
\gdef\dash---{\thinspace---\hskip.16667em\relax}
\gdef\smdash--{\thinspace--\hskip.16667em\relax}
\gdef\op|{\,|\;}
\lstdefinelanguage{Go}{
  morekeywords=[1]{
    break,default,func,interface,select,case,defer,go,map,
    struct,chan,else,goto,package,switch,const,fallthrough,
    if,range,type, continue,for,import,return,var},
  morekeywords=[2]{
    append,cap,close,complex,copy,delete,imag,
    len,make,new,panic,print,println,real,recover},
  morekeywords=[3]{
    bool,byte,complex64,complex128,error,float32,float64,
    int,int8,int16,int32,int64,rune,string,
    uint,uint8,uint16,uint32,uint64,uintptr},
  morekeywords=[4]{true,false,iota,nil},
  morestring=[b]{"},
  morestring=[b]{'},
  morestring=[b]{`},
  comment=[l]{//},
  morecomment=[s]{/*}{*/},
  sensitive=true
}
\lstdefinelanguage{tla}
  {morekeywords={MODULE,EXTENDS,CONSTANTS,CONSTANT,ASSUME,VARIABLES,VARIABLE,
          EXCEPT,UNCHANGED,TRUE,FALSE,IF,THEN,ELSE,LET,IN,SUBSET},
          comment=[l]{\\*},
  morecomment=[s]{(*}{*)},
  mathescape=true,escapechar={@},
  basicstyle=\small\sffamily,
  commentstyle=\itshape\rmfamily\small,
  keywordstyle=\sc
}
\definecolor{dkgreen}{rgb}{0,0.6,0}
\definecolor{gray}{rgb}{0.5,0.5,0.5}
\definecolor{mauve}{rgb}{0.58,0,0.82}
\tiny\color{gray},
\newcommand{\tlap}[0]{\textsc{TLA}\textsuperscript{+}}
\newcommand{\primary}{\mathit{primary}}
\newcommand{\bheight}{\mathit{Height}}
\newcommand{\btime}{\mathit{Time}}
\newcommand{\vals}{\mathit{Val}}
\newcommand{\nvals}{\mathit{NextV}}
\newcommand{\commit}{\mathit{LastCommit}}
\newcommand{\blockid}{\mathit{LastBlockID}}
\newcommand{\vtype}{\mathit{Type}}
\newcommand{\vround}{\mathit{Round}}
\newcommand{\msgblockid}{\mathit{BlockID}}
\newcommand{\votes}{\mathit{Votes}}
\newcommand{\fhash}{\mathit{hash}}
\newcommand{\targeth}{\mathit{targetHeight}}
\newcommand{\lightStore}{\mathit{lightStore}}
\newcommand{\fexecute}{\mathit{execute}}
\newcommand{\fproof}{\mathit{proof}}
\newcommand{\fmatchproof}{\mathit{matchProof}}
\newcommand{\fmatchhash}{\mathit{matchHash}}
\newcommand{\PossibleCommit}{\mathit{PossibleCommit}}
\newcommand{\chain}{\mathsf{chain}}
\newcommand{\totalvp}{\mathsf{totalVP}}
\newcommand{\correct}{\mathsf{correctUntil}}
\newcommand{\trustp}{\mathit{TP}}
\newcommand{\current}{\mathsf{current}}
\newcommand{\nextHeight}{\mathsf{nextHeight}}
\newcommand{\LatestVerified}{\mathsf{LatestVerified}}
\newcommand{\startHeader}{\mathit{sh}}
\newcommand{\startTime}{\mathit{sTime}}
\newcommand{\trustedHeader}{\mathit{trustedHeader}}
\DeclareMathOperator{\UNION}{\textsc{union}}
\DeclareMathOperator{\SUBSET}{\textsc{subset}}                                  
\DeclareMathOperator{\DOMAIN}{\textsc{domain}}
\newcommand{\A}{\forall}
\newcommand{\E}{\exists}
\newcommand{\TRUE}[0]{\textsc{true}}
\newcommand{\FALSE}[0]{\textsc{false}}
\newcommand{\BOOLEAN}[0]{\textsc{boolean}}
\newcommand{\cmark}{\ding{51}}
\newcommand{\xmark}{\ding{55}}
\newcommand{\bcheck}[1]{$\text{\cmark}_{\le#1}$}
\newcommand{\bmark}[1]{$\text{\xmark}_{=#1}$}
\newcommand\tbh[1]{\textsf{\textbf{\scriptsize{#1}}}}
  \providecommand\BibTeX{{
    \normalfont B\kern-0.5em{\scshape i\kern-0.25em b}\kern-0.8em\TeX}}}
\title{A Tendermint Light Client\thanks{Supported by Interchain Foundation}}
\author[1]{Sean Braithwaite}
\author[1]{Ethan Buchman}
\author[1,2]{Ismail Khoffi}
\author[1]{Igor Konnov}
\author[1]{Zarko Milosevic}
\author[1]{Romain Ruetschi}
\author[1]{Josef Widder}
\affil[1]{Informal Systems}
\affil[2]{LazyLedger}
\begin{document}

\maketitle

\begin{abstract}
In Tendermint blockchains, the proof-of-stake mechanism and the
     underlying consensus algorithm entail a dynamic fault model that
     implies that the active validators (nodes that sign blocks) may
     change over time, and a quorum of these validators is assumed to
     be correct only for a limited period of time (called trusting
     period).
The changes of the validator set  are under control of the blockchain
     application, and are committed in every block.
In order to check what is the state of the blockchain application at
     some height $h$, one needs to know the validator set at
     that height so that one can verify the corresponding digital signatures
     and hashes.
A na\"\i{}ve way of determining the validator set for height $h$
     requires one to:  (i) download all blocks before $h$, (ii) verify
     blocks by checking digital signatures and hashes and (iii)
     execute the corresponding transactions so the changes in the
     validator sets are reproduced.
This can potentially be very slow and computationally and data
     intensive.
     
In this paper we formalize the dynamic fault model imposed by
     Tendermint, and describe a light client protocol that allows to
     check the state of the blockchain application that, in realistic
     settings, reduces significantly the amount of data needed to be downloaded,
     and  the number of required computationally expensive signature
     verification operations.
In addition to mathematical proofs, we have formalized the light
     client protocol in \tlap{}, and checked safety and liveness with
     the APALACHE model checker.
\end{abstract}

\newcommand{\drawblock}[9]{
  \node (height#1) at (#2 cm, #3) [typetag, anchor=west, xshift=1mm]
        { $\Height \mapsto #4$};
   \node (time#1) [typetag, below=of height#1.west, anchor=west]
         { $\Time_{#4}$};

     \node (data#1) [typetag, below=of time#1.west, anchor=west]
        { Data, AppState, ...};
  \node (val#1) [typetag, below=of data#1.west, anchor=west] { $\Val_{#4}\mapsto #5$};        
  \node (nv#1) [typetag, below=of val#1.west, anchor=west] { $\NVal_{#4} \mapsto #6$};

  \node (lcsig#1) [typetag, below=1.5cm of nv#1.west, anchor=west,
    xshift=2mm,minimum width=3.4cm,text width=3.4cm] { $\Votes_{#4} \mapsto #8$};

  \node (lcbID#1) [typetag, below=of lcsig#1.west, anchor=west,
    minimum width=3.4cm,text width=3.4cm] { $\BlockID_{#9}$};
  \node[draw=black,rounded corners=3pt, fit=(lcsig#1)(lcbID#1),
        label={[fill=gray!90,font=\color{white}\scriptsize\ttfamily]above:$\Commit_{#4}$: LastCommit }
  ] (lcbox#1) {};

  \node[draw, thick, #7rounded corners=4pt, fit=(height#1)(lcbox#1), 
        label={[title,rotate=90,xshift=.25cm]above left:Block #4}] (block#1) {};
}

\section{Introduction}
\label{sec:intro}

Tendermint is a leading state machine replication (SMR)
     engine~\cite{Buchman2016} that tolerates Byzantine faults. It
     supports arbitrary state machines written in any programming
     language by providing a flexible interface for application development.
Its core consensus protocol~\cite{bkm2018latest} is a variant of the
     algorithm for Byzantine faults
with Authentication from~\cite{DLS88}, built on top of an
     efficient gossiping layer.

Tendermint is designed for public, open-membership networks, and a
     production deployment  consists of a number of node types.
The consensus-forming nodes, responsible for proposing blocks and
     voting on them, are called \emph{validators}.
Validators are not directly connected to one another, but rather
     through a gossip network consisting of non-validating nodes
     called \emph{full nodes}, who execute the same consensus protocol
     and relay its messages but do not participate in block production
     or voting.
Besides this restriction, a full node also maintains the copy of the
     blockchain,  and executes the same set of protocol rules like
     validators.
Since validator nodes are highly security-sensitive (as they need to
     manage private keys), their operators often run additional full
     nodes, termed \emph{sentry nodes}, as gateways to the rest of the
     network.
A validator node typically connects only to its sentry nodes, which
     then form connections to other full nodes and sentry nodes.

A final type of node is a light node, also called a \emph{light
     client}.
Light nodes do not participate formally in the network; they make
     requests for data from full nodes and verify, by checking hashes
     and signatures, that such data indeed came from the underlying
     blockchain.
Light clients synchronize to the latest block of the blockchain by
     verifying validator signatures.
Once synchronized, they can verify  Merkle proofs about the state of
     the blockchain by using the Merkle root stored in the latest
     block.
Effectively, light nodes perform read operations of the blockchain.
However, since light nodes do not follow the full consensus protocol
     or execute transactions, they operate under a different security
     model than full nodes.
A light client protocol is normally designed  such that it does not
     have high computational and bandwidth requirements as it is
     supposed to be used also in constrained environments, for example
     in mobile devices.

In a traditional Byzantine-fault-tolerant SMR system, a read operation
     by a client is normally implemented by having the client send a
     request to all replicas and then waiting to receive the same
     response from at least $f+1$ replicas, where $f$ is the maximum
     number of faulty replicas.
This approach is not applicable in the Tendermint model as (i) a
     client does not have direct access to  validator nodes (as they
     are connected to the network only through sentry nodes that are
     not necessarily public)  and (ii) validator set changes are
     dynamic so it is not obvious what is the current validator set.
In Tendermint, no restrictions are placed on how the validator set may
     change from one block to the next; the intersection between
     adjacent validator sets may be empty.
Changes to the validator set are determined entirely by the
     application state machine.

In this paper we describe a protocol that addresses the challenges (i)
     and (ii) for the design of a light client.
As the light client cannot query a validator directly, it cannot query
     the required $f+1$ validators needed for the traditional
     approach.
The described shielding of validator nodes by sentries thus forces us
     to design a protocol that allows the light node to (a) obtain the
     required data from other nodes (different from validators), and
     (b) verify the received information by checking signatures (in
     contrast to checking $f+1$ identical responses).
In addition, in a context where the validator set is always changing,
     it is not clear to a light client which validators can  currently
     be trusted, and who has the authority to sign a block as a
     validator.
The essential operation of a light node thus becomes tracking the
     evolving validator set over time.

             \newcommand{\NVal}{\mathrm{NextV}}
\newcommand{\Val}{\mathrm{Val}}
\newcommand{\Votes}{\mathrm{Votes}}
\newcommand{\BlockID}{\mathrm{BlockID}}
\newcommand{\Commit}{\mathrm{Commit}}
\newcommand{\Height}{\mathrm{Height}}
\newcommand{\Time}{\mathrm{Time}}

\begin{figure*}[t]
\begin{tikzpicture}[
  >=latex,
  node distance=5mm,
  title/.style={fill=gray!90,font=\fontsize{8}{8}\color{white}\ttfamily},
  typetag/.style={rectangle, draw=gray!50, font=\scriptsize\ttfamily,
    minimum width=3.8cm, text width=3.5cm}
]

\drawblock{b1}{0}{0}{1}{p_1,p_2,p_3,p_4}{p_1,p_2,p_3,p_4}{}{\emptyset}{0}
\drawblock{b2}{5.15}{0}{2}{p_1,p_2,p_3,p_4}{p_3,p_4,p_5,p_6}{}{p_1,p_2,p_3}{1}
\drawblock{b3}{11.5}{0}{3}{p_3,p_4,p_5,p_6}{p_3,p_4,p_5,p_{42}}{}{p_2,p_3,p_4}{2}

\draw [decorate,decoration={brace,amplitude=10pt},xshift=4pt,yshift=0pt]
($(blockb2.north west) + (4.3,0)$) -- ($(blockb2.south west) + (4.3,0)$) node (b2p)
      [black,midway,xshift=0.6cm] {};

\draw [thick,->]  (lcbIDb3.west) to[bend left=10]
    node[anchor=north east, sloped,near start]
    {\textbf{\textsf{hash}}} (b2p.west);

\draw[thick,->] (nvb1.east) to[bend left=10]
    node[anchor=north, sloped] {\textbf{\textsf{=}}} (valb2.west);

\end{tikzpicture}

\caption{\boldmath{Sequential Verification. If Block 1 is trusted to be from the
blockchain, then $\NVal_1$ defines the validators $\Val_2$ that vote for Block
2. If a commit for Block 2 (recorded as the LastCommit
in Block~$3$), contains votes from 
more than $2/3$ of the voting power in $\NVal_1=\Val_2$, 
then Block 2 can be trusted to be from the blockchain.}}
\label{fig:sequ-verif}
\end{figure*}

Tendermint is particularly popular for proof-of-stake blockchains.
In such blockchains, becoming a validator requires an economic
     commitment, often referred to as a bond (or stake), which
     guarantees the correct behaviour of the validator.
Validators that behave incorrectly (i.e., by  going offline, or
     signing conflicting blocks) lose some fraction of their bond.
Thus, bonded validators have an incentive to behave correctly.
For a validator to reclaim its stake, it must wait for a so-called
     \emph{unbonding period} before their stake is returned.
This unbonding period is sufficiently long to detect misbehavior and
     punish it.
But this also means that participants only have the incentive to
     follow the protocol for some limited time.
Once they can be sure that they will get their stake back, it is
     unsafe to rely on their cooperation.
These Proof-of-Stake mechanics result in a Tendermint Security Model
     with time-dependent fault assumptions that depend on the blocks.

Tendermint is also a core component of the Cosmos
     Project~\cite{cosmos}, which consists of many independent
     proof-of-stake blockchains.
At the heart of the Cosmos Project is the InterBlockchain
     Communication  (IBC) protocol for reliable communication between
     independent blockchains; what TCP is for computers, IBC aims to
     be for blockchains.
IBC is  based on light client protocols, which enable one
     blockchain to perform read operations on another.

\paragraph{Contributions.}
The major contribution of this paper is to
provide the formal underpinning for Tendermint light
clients, which rests on three pillars
\begin{itemize}
\item A formalization of the Tendermint security model.
\item A light client protocol based on the security model, and modeled in TLA+.
\item Model Checking the correctness of the protocol against the
  security model with the Apalache model checker.
\end{itemize}

We start by formalizing the Tendermint Security model; we then discuss
     the design of the Light Client Verification that implements a
     fault-tolerant read based on the model.
The formalization of the failure model,  and the invariants introduced
     in this paper are based on the open source reference
     implementation~\cite{TMCORE} and the specifications of the data
     structures~\cite{TMBC}.
The challenge addressed here is that the light client might have a
     block of height $h_1$ and needs to read the block of height $h_2
     > h_1$.
Checking all block headers of heights from $h_1$ to $h_2$ might be too
     costly (e.g., in terms of energy for mobile devices).
The described protocol tries to reduce the number of intermediate blocks that
     need to be checked, by exploiting the guarantees provided by the
     security model.

In the following we outline our approach and
     start by describing the Tendermint block structure and how nodes
     follow the state of the application in the standard way.

\section{Overview}
\label{sec:glance}

\paragraph{Tendermint Signature Scheme}

Figure~\ref{fig:sequ-verif} shows an example of the first three blocks
     generated by a Tendermint blockchain.
Block~1 is the result of an instance of Tendermint consensus.
The instances are called heights.
The validators who actively participated in this consensus instance and 
their IDs (public key) are stored in $\vals_1$.
In Tendermint, instead of having one vote each, validators may have
     different voting powers, which are also stored in $\vals_1$.
We say that in consensus a quorum is reached if validators that
     represent more than $2/3$ of the voting power agree.

In addition to deciding upon data to be put into a
     block, these validators also decide on the nodes who are going to
     participate in the next height, and the IDs (public keys) of
     these nodes are stored in $\nvals_1$.
The same nodes are stored in $\vals_2$, and they decide on Block~2.
The proof that they indeed decided on Block~2 is given by a commit.
A commit contains a set of signed messages that contain a hash called
     $\msgblockid_2$, and in the blockchain it is stored in Block~3.
In order for a commit to be valid, it must contain messages by a
     quorum of validators.
In our example, we assume each validator has voting power 1, and
     $\votes_3$ contains signatures by more than $2/3$ of the nodes in
     $\vals_2$.
Thus, if a node has obtained Block~1 from a trusted source, and it has
     downloaded Block~2 and Block~3, it can check that Block~2 is
     indeed from the blockchain, by checking the described
     relationships of hashes, quorums, and signatures.

     \begin{figure*}[t]
  \scalebox{0.8}{
\begin{tikzpicture}[
  >=latex,
  node distance=1mm,
  title/.style={font=\fontsize{6}{6}\color{black!50}\ttfamily},
  typetag/.style={rectangle, draw=black!50, font=\scriptsize\ttfamily,
    minimum width=2.8cm, text width=2.5cm}
]

  \node (p1) at (0,0) {$p_1$};
  \node (p2) [below=of p1] {$p_2$};
  \node (p3) [below=of p2] {$p_3$};
  \node (p4) [below=of p3] {$p_4$};
  \node (p5) [below=of p4] {$p_5$};
  \node (p6) [below=of p5] {$p_6$};
  \node (p7) [below=of p6] {$p_7$};
  \node (p8) [below=of p7] {$p_8$};
 \node (end) [right=16.5cm of p8.south] {time};
 \draw [->] (p8.south) --  (end);

 \draw [dashed] (0.5,1) -- (0.5,-5.3) node [anchor=north] {$\btime_1$};
 \draw [dashed] (8.5,1) -- (8.5, -0.1);
 \draw [<->] (0.5,0.8) -- node[anchor=south] {trusting period}  (8.5,0.8);
  \draw [dashed] (4.5,-2.5) -- (4.5,-5.3) node [anchor=north] {$\btime_9$};
  \draw [dashed] (7.5,-3.5) -- (7.5,-5.3) node [anchor=north] {$\btime_{17}$};
  \draw [thick] (8.3,0.5) -- (8.3,-5.3) node [anchor=north] {$t$};

\node[fit={(0.5,0.2) (8.5,-1.8)}, inner sep=0pt, draw, dashed,
  thick,rounded corners=4pt, thick]
(rect) {at most one out of $p_1$, $p_2$, $p_3$, $p_4$ is faulty\\~};
\node[fit={(4.5,-0.8) (12.5,-2.8)}, inner sep=0pt, draw, dotted,
  thick,rounded corners=4pt, thick]
(rect) {at most one out of \hspace{.5cm} $p_3$, $p_4$, $p_5$, $p_6$ is faulty\\~};
\node[fit={(7.5,-2.0) (15.5,-4.0)}, inner sep=0pt, draw, dash dot,
  thick,rounded corners=4pt, thick]
(rect) {~\\~\\at most one out of $p_5$, $p_6$, $p_7$, $p_8$ is faulty};

\end{tikzpicture}
}
\caption{\boldmath{Example for the Tendermint security model where all
  nodes have voting power at most 1. At $\btime_1$ a
  block $b_1$ with $\nvals_1 = \{ p_1, p_2, p_3,
  p_4\}$ is generated.
  This imposes a fault assumption during the
  trusting period that at most one of the nodes are faulty as $n>3f$,
  where $n$ is the number of nodes and $f$ is an upper bound on the
  number of faults.
  The rectangles depict the trusting period and which
nodes are bound by the Tendermint Security Model. At the points
outside the rectangles, the nodes are not bound by a fault
assumption.
We do not show the fault assumptions
imposed by blocks other than $b_1$, $b_9$, and~$b_{17}$.}} 
\label{fig:tfm}
\end{figure*}

An important feature of Tendermint is that the choice of $\nvals$ is
     application-specific and unrestricted; for any height~$i$, the sets $\vals_i$ and $\nvals_i$ need
     not intersect.
Thus, if one needs to check some data, e.g., the existence of a
     transaction in a block of height~$\ell$, one a priori
     needs to download the blocks for all heights up to $\ell+1$, and
     sequentially check all the blocks.
This is computationally expensive due to checking hashes and
     signatures, and may hinder the access to blockchain, e.g., for
     mobile devices.
     That said, in many deployments of Tendermint, e.g.,
     on the Cosmos Hub blockchain,
     we observe that large changes in the validator sets are rare.
Thus the question is to design a protocol that allows to verify the
     block of height~$\ell$ without downloading and verifying all the
     blocks required by the sequential method.
We call such a protocol a (skipping) light client.
It implements a read operation of a block, by
     communicating with full nodes.
As some full nodes may be faulty, this functionality must be
     implemented in a fault-tolerant way.
To do so, we next formalize the fault assumption that is imposed by
     Tendermint.

     \paragraph{Tendermint Security Model}

The staking and unbonding mechanism induces a security
     model: starting at the time a block gets generated (this time is
     stored in the block), more than two-thirds of the next validators
     of a new block are correct for the duration of
     the \emph{trusting period}, a duration which is less than the unbonding
     period defined by the Proof-of-Stake mechanics. 
An example is sketched in Figure~\ref{fig:tfm}: Block~1 is created at
     time $\btime_1$, and the full nodes $p_1$, $p_2$, $p_3$, and
     $p_4$ are decided to be the validators of the next block.
More than two thirds, that is, at least three, are assumed to be
     correct during the time interval depicted by the dashed rectangle.
Similarly, when Block~9 is generated, an additional constraint over
     $p_3$, $p_4$, $p_5$, and $p_6$ is added.
This so-called security model can be seen as a Byzantine fault model
     with dynamic (or moving) faults. 
     
The fault-tolerant read operation over a Tendermint blockchain needs
     to be designed for this security model.
To do so, after formalizing the blockchain data structure in
     Section~\ref{sec:blockchain}, we formally define this security
     model in Section~\ref{sec:TMFM}.

It should be noted that Tendermint provides 
     guarantees even outside this security model, where one third or more validators
     are faulty within a trusting period and may thus fork the blockchain.
     The security model introduced here enables clear separation of concerns 
     between light client verification, which operates within the model,
     and fork detection, which operates outside it. We leave fork detection to future work.

\paragraph{Skipping Verification under Tendermint Security Model}

In Section~\ref{sec:problem} we formalize the distributed computing
     problem for which our light client is designed, and prove the
     central result that allows us to solve it in
     Section~\ref{sec:checks}.
The underlying intuition is as follows:  Recall Figure~\ref{fig:tfm}.
Assume a light client starts with Block~1 from the blockchain, and
     needs to verify Block~{17} at time~$t$.
Consider the example in Figure~\ref{fig:skip-verif}.
Here, $\nvals_1 = \{p_1,p_2,p_3,p_4\}$, which by the Tendermint
     security model means that more than two thirds of these nodes are
     correct during the trusting period, which is depicted by the dashed
     rectangle in Figure~\ref{fig:tfm}.
Let's assume the light client has downloaded blocks of height~$17$
     and~$18$ with a validator set  $\vals_{17} =
     \{p_5,p_6,p_7,p_8\}$.
The light client would like to verify Block~{17} based on
     $\commit_{18}$, which would need to contain a quorum of
     $\vals_{17}$.
However,  being aware only of the dashed rectangle, it is impossible to
     infer whether any of the nodes in $\vals_{17}$ are correct.
Thus Block~{17} cannot be verified at this time.

The idea behind the light client is to try bisection, i.e., since
     the first check for  Block~{17} failed, the light client will
     download a header in the middle, i.e., Block~9 from some node in
     the Tendermint network.
To verify Block~9, the light client also needs a commit for that block
     which is stored in the next block, so that it also downloads
     Block~{10}.
Let's consider the node has downloaded Blocks~9 and~{10} as
     shown in Figure~\ref{fig:skip-verif}.
At time~$t$, the node can still trust that less than one third of the
     nodes in $\nvals_1$ is faulty.
The set $\votes_{10}$ contains two of the nodes from $\nvals_1$, namely,
     $p_3$ and~$p_4$.
By the security model, at least one of them is necessarily correct.
As a result, at least one of the nodes who signed Block~9 is correct.

\begin{figure*}[t]
\begin{tikzpicture}[
  >=latex,
  node distance=5mm,
  title/.style={font=\fontsize{6}{6}\color{black!50}\ttfamily},
  typetag/.style={rectangle, draw=black!50, font=\scriptsize\ttfamily,
    minimum width=3.8cm, text width=3.5cm}
]

\drawblock{bi}{-1}{0}{{1}}{p_1,p_2,p_3,p_4}{p_1,p_2,p_3,p_4}{}{\emptyset}{0}
\drawblock{bj}{5}{0}{{9}}{p_3,p_4,p_5,p_6}{p_3,p_4,p_5,p_6}{}{p_1,p_2,p_3,p_4}{8}
\drawblock{bjp1}{11}{0}{{10}}{p_3,p_4,p_5,p_6}{p_3,p_4,p_5,p_6}{}{p_3,p_4,p_5}{9}

\node (dots) [right=of blockbi] {\textbf{\textsf{\dots}}};

\draw [decorate,decoration={brace,amplitude=10pt},xshift=4pt,yshift=0pt]
($(blockbj.north west) + (4.3,0)$) --  ($(blockbj.south west) + (4.3,0)$) node (bjp)
      [black,midway,xshift=0.6cm] {};

\draw [thick,->]  (lcbIDbjp1.west) to[bend left=10]
    node[anchor=north east, sloped,near start]
    {\textbf{\textsf{(2) hash}}} (bjp.west);

\newdimen\mydimi
\pgfextracty{\mydimi}{lcsigbjp1.east}
\draw [thick,->]  (nvbi.east) -- +(0.5,0) -- +(0.5,-4) -- node[anchor=south]
      {\textbf{\textsf{(1) signature of at least one correct validator}}} +(12.8,-4) to[bend right=10]
       (lcsigbjp1.east);

\end{tikzpicture}
\caption{\boldmath{
Example for Skipping Verification.
The light client has downloaded blocks of height 1, 9, 10, and has
     trust that Block~1 was generated by a Tendermint blockchain.
Moreover assume that the trusting period of Block~1 has not expired, that is, the
     current time  $t < \btime_1 + \trustp$.
Consider the two checks (1) and (2): (1)  $\Votes_{10}$ contains one
     correct validator from  $\NVal_1$, that is, it contains
     validators that represent more than $\frac{1}{3}$ of the voting
     power in $\NVal_1$.
(2) The hash matches.
By the behavior of a correct node executing Tendermint consensus, if
(1) and (2) are satisfied, then $\vals_9$ is indeed the validator set of
height 9 on the blockchain, and
by the Tendermint Security Model,
the light client can
     trusts $\Commit_{10}$, and thus Block~$9$ was generated by a
     Tendermint blockchain and can be trusted.}  }
\label{fig:skip-verif}
\end{figure*}
 
In Tendermint consensus, correct nodes only sign blocks that were
     properly generated; thus, Block~9 in the figure can be trusted.
As the light client has now established that Block~9 is from
     the blockchain, $\nvals_9$ from
     Figure~\ref{fig:skip-verif} imposes a trust assumption that
     corresponds to the dotted rectangle in Figure~\ref{fig:tfm}.
Thus, the light client may now try to verify Block~{17} based
     on this new trust assumption.

As the Tendermint security model makes reference to real time, and the
     trusting period is a concrete time duration, we require that the
     light nodes local time is approximately synchronized to the time
     of the Tendermint blockchain.
This is needed to check whether a block is within the trusting period.
However, the trusting period is in the order of weeks, so that in
     typical scenarios a clock precision of several seconds is
     sufficient and easily achievable.
Similarly, for liveness we require that downloading a header is faster
     than the duration of the trusting period (i.e., two weeks).
While from a theoretical viewpoint, all this implies that we operate
     in a synchronous computation model, in practice, due to the order
     of time durations, this does not impose practical limitations.
     
In the implementation we do several performance improvements.
Rather than downloading complete blocks, we download so-called
\emph{lightblocks} that just contain the required metadata to do the checks
     above.
Such a lightblock for Block~9 contains the header of the Block~9 and
     also $\commit_{10}$, as this is the only information we need from
     Block~$10$. Thus, instead of downloading Blocks~9 and~10, the
     implementation downloads the lightblock~9, only.

\paragraph{Model Checking and Implementation}

In Section~\ref{sec:tla} we discuss how we have formalized the
     blockchain and the light client protocol in~\tlap{}.
In addition to having a machine-readable protocol specification
     in~\tlap{} (that reads very similar to the mathematical
     description from Sections~\ref{sec:blockchain}--\ref{sec:proto}),
     we were able to produce non-trivial system executions with the
     symbolic model checker \textsc{Apalache}, and we checked the
     protocol for small instances of the blockchain.
In Section~\ref{sec:impl} we discuss our implementation in Rust that
     is based on a modular architecture that simplifies testing.

\section{Blockchain data structure}
\label{sec:blockchain}

We give a formalization of the Tendermint block structure.
We start with an abstract view, in particular with respect to the
     domains of data fields, to highlight concepts as independent of
     their implementation as possible.
We will later refine towards the implemented data structures~\cite{TMBC} to
     address distributed aspects.

A set of  transactions is stored in a data structure called
     \emph{block}, which contains a field called \emph{header}.
In the implementation, hashes are used to reduce the amount of data
     that needs to be (re)transmitted and stored.
Hashes are used within a block, where a header stores hashes of the
     data of the block.
But hashes are also used to point to the previous block.
The former usage is done just for performance, so that for our
     purposes we ignore these hashes, and we will assume that the
     blockchain is a list of headers, rather than a list of blocks.
The hashes that point to previous blocks are needed to implement that
     chain, and we will treat them explicitly in this model.

\begin{definition}[Header]\label{def:header}
A header contains the following fields, whose domain (except the
height)
we leave
unspecified for now:

\begin{itemize}

\item
  $\bheight$: non-negative integer
\item
  $\btime$
\item
  $\commit$
\item
  $\blockid$
\item
  $\vals$
\item
  $\nvals$
\item
  \texttt{Data}
\item
  \texttt{AppState}
\item
  \texttt{LastResults}
\end{itemize}
\end{definition}

In the implementation, $\blockid$ is also stored as part of $\commit$ as
indicated in Figure~\ref{fig:sequ-verif}, and is a hash of the
previous block. For our theoretical treatment it is more convenient to
not treat it within the $\commit$. This redundancy is also subject to
ongoing discussions in the Tendermint project~\cite{LastBlockID}.

Tendermint consensus~\cite{bkm2018latest} generates a sequence of such
headers, that ensures the following invariants:

\begin{definition}[Basic Invariants] \label{def:basicinv}
  A Tendermint block\-chain is a list called $\chain$ of headers that,
  for all $i < len(\chain) - 1$,
  satisfies: 
  \begin{enumerate}
\item 
 $\chain[i].Height + 1 = \chain[i+1].Height$
(We do not write $\chain[i].Height = i$, to allow that a chain
can be started at some arbitrary height, e.g., when there is social
consensus to restart a chain from a given height/block.)

\item
$\chain[i].\btime < \chain[i+1].\btime$

\item
$\chain[i+1].\vals = \chain[i].\nvals$
\label{item:valnextval}
\end{enumerate}
\end{definition}

Definition~\ref{def:basicinv}(\ref{item:valnextval}) captures the
     changing validator sets discussed in Section~\ref{sec:glance}.
In addition to these basic invariants there are invariants that are
     based on hashes and digital signatures.
We start to introduce their semantics by defining some preliminary
     functions: 

\begin{definition}[Abstract auxiliary soundness functions]\label{def:func}
The system provides the following functions:

\begin{enumerate}
  \item \label{item:hash}
 $\fhash$: We assume that every hash uniquely identifies the data it hashes

    \item
 $\fexecute$: used for state machine replication. The function maps $Data$
  (transactions) and an application state to a new state. It is a function
      (deterministic transitions).

\item
$\PossibleCommit$: There is a function $\PossibleCommit$ that maps a
  block (header) to the domain of $\commit$ from
  Definition~\ref{def:header}. 
\label{item:posscommit}
  \item \label{item:proof}
   $\fproof(b,commit)$: a predicate: true iff
      \begin{enumerate} 
     \item $b$ is in the $\chain$, i.e., there is an $i$ such that
       $\chain[i] = b$
     \item $commit$ is in $\PossibleCommit(b)$.
      \end{enumerate}
      
\end{enumerate}
\end{definition}

In Tendermint consensus~\cite{bkm2018latest}, the validators sign a
     given block.
A set of signatures by a quorum of validators for a block is called a
     commit.
Some of the required semantics can be proven independently of these
     details.
To capture these, we introduce
     Definition~\ref{def:func}(\ref{item:posscommit}).

Because $\fproof$ in Definition~\ref{def:func}(\ref{item:proof})
refers to the $\chain$, it depends on the execution, which results in a different
     quantifier order.
For instance, we say "there exists a function $\fhash$ such that for
     all runs", while we say "for each run there exists a function
     $\fproof$".
The consequence is that in Definition~\ref{def:func}(\ref{item:hash}) $\fhash$ is a predetermined function
     (implemented), while $\fproof$ will have to be computed at runtime as a function of the $\chain$.
The challenge in a distributed system is to locally compute $\fproof$
     without necessarily having complete knowledge of $\chain$.
In the context of the light client, we even want to infer knowledge
     about $\chain$ from the outcomes of the local computation of
     $\fproof$.
We  use digital signatures for that, and
 introduce them below when we introduce the distributed
     aspects.

\begin{definition}[Soundness predicates]
Given two blocks $b$ and $b'$:
\begin{enumerate}
\item $\fmatchhash(b,b')$ iff $\fhash(b) = b'.\blockid$
\item $\fmatchproof(b,b')$ iff $\fproof(b, b'.\commit)$
\end{enumerate}
\end{definition}

\begin{definition}[Security invariants] For all $i < len(\chain)-1$:
  \begin{enumerate}
    \item
  $\fmatchhash(\chain[i], \chain[i+1])$
 
   \item
 $\fmatchproof(\chain[i], \chain[i+1])$

   \item
 $\chain[i+1].AppState = $ \\ $\fexecute(\chain[i].Data,\chain[i].AppState)$
\end{enumerate}
\end{definition}

The function $\fmatchhash$ formalizes the hash arrow in
     Figure~\ref{fig:sequ-verif}.
We will show in Section~\ref{sec:checks} how $\fproof$ and thus
     $\fmatchproof$ can be checked in a distributed system in the
     presence of Authenticated Byzantine faults, based on quorums in
     $\votes$.
 
\section{Tendermint Security Model}
\label{sec:TMFM}

In Section~\ref{sec:intro}, we discussed that in Tendermint
     blockchains the proof-of-stake mechanism entails a time-dependent
     security model.
We capture this by the following formalization, which states that once
     a new validator set ($\nvals$) is chosen, we trust that it
     contains a correct quorum for some limited time, namely the
     trusting period.
We start with preliminaries.

\begin{definition}[Validator Data Structures]
  Given a full node, a \emph{validator pair} is a pair
  $(peerID, vp)$ , where  $peerID$ is the PeerID (public key) of a
full node, and the voting power $vp$ is an integer (representing the full node's
voting power in a certain consensus instance).
A \emph{validator set} is a set of validator pairs. For a validator set
$V$, we write $\totalvp(V)$ for the sum of the
voting powers of its validator pairs.
\end{definition}

\begin{definition}[Domain of Distributed Commit]\label{def:domain}
A commit is a set of \texttt{precommit} messages sent and signed by 
     validator nodes during the execution of Tendermint
     consensus\cite{bkm2018latest}.
Each message contains the following fields 
\begin{enumerate}
\item \texttt{Type}: precommit 
  \item
    \texttt{Height}: positive integer
    \item \texttt{Round} a positive integer 
\item \texttt{BlockID} a hash value of a block
\end{enumerate}
\end{definition}

We assume the authenticated Byzantine fault model~\cite{DLS88} in
     which no node (faulty or correct) may break digital signatures,
     but otherwise, no additional assumption is made about the
     internal behavior of faulty nodes.
That is, faulty nodes are only limited in that they cannot forge
messages. This implies for Definition~\ref{def:domain}, e.g.,
that a faulty node $p_f$ may sign a \texttt{precommit}
     message for a hash of a block that is not on the blockchain, but
     it may not generate a precommit message that appears to be signed
     by a correct node $p_c$ (unless $p_c$ actually signed that message
     before and $p_f$ received~it). 
     
A Tendermint blockchain has the \emph{trusting period}
  as a
     configuration parameter~$\trustp$.
We define a predicate $\correct(n, t)$, where $n$ is a node and $t$ is
     a time point.
The predicate $\correct(n, t)$ evaluates to true if and only if the node $n$
     follows all the protocols (at least) until time $t$. (It is false
     if a node $n$ deviates from the protocol once by time $t$.)

\begin{definition}[Security Model]
  \label{def:secmodel}
If a block $h$ is in the chain, then there exists a subset
$C$ of $h.\nvals$, such that:
$$
\totalvp(C) > \frac{2}{3}
\totalvp(h.\nvals)$$
and for
every validator pair $(n,p) \in C$, it holds that
$$\correct(n, h.\btime + \trustp).$$
 \end{definition}

The definition of correct refers to realtime, while it is used here
     with $\btime$ as stored in a block and the configuration
     parameter \emph{trusting Period} $\trustp$, which are "hardware
     times".
To not clutter the presentation, we do not make a distinction here
     between real-time and hardware time, and we assume that the
     hardware clock is sufficiently synchronized to real time.
Also, the trusting period $\trustp$ is typically in the order of
     weeks, so that inaccuracies in time synchronization can be dealt
     with security margins.

\begin{definition}[Distributed Commit]
 \label{def:distrcommit}
For a block $b$, each element $PC$ of $\PossibleCommit(b)$
satisfies that
\begin{enumerate}
  \item $PC$ contains only votes by validators from $b.\vals$
  \item $\totalvp(PC) > \frac{2}{3}
    \totalvp(b.\vals)$
    \item and  there is an $r$ such that
each vote $v$ in $PC$ satisfies:
\begin{enumerate}
  \item $v.\vtype = \mathsf{precommit}$
  \item $v.\bheight = b.\bheight$
  \item $v.\vround = r$  
  \item $v.\msgblockid = hash(b)$
\end{enumerate}
\end{enumerate}
\end{definition}

In a distributed commit necessarily all $\msgblockid$s
     are equal, which can be  checked locally.

We have now defined all the guarantees provided by a Tendermint
blockchain that are necessary to formalize what it means to observe
the state of the blockchain from the outside.
 
\section{The Light Client Verifier Problem}
\label{sec:problem}

In the most abstract viewpoint, the light client just implements a
     read of a header (block) of a given height from the blockchain.
This header $h$ needs to be generated by the Tendermint consensus.
In particular, a header that was not generated by the blockchain
     should never be stored.
Due to the evolving validator sets, without constantly following the
     progress of the blockchain, in the presence of Byzantine faulty
     nodes, one cannot know a priori who are the relevant validators
     for a block that are allowed to sign a block: For instance, a set
     of Byzantine nodes, which never participated in Tendermint
     consensus may generate and sign a block that structurally is
     according to the definitions.
Thus the \emph{Verifier} has to locally check whether the nodes who
     sign a block can be trusted, more precisely, whether sufficiently
     many correct nodes have signed.

We will start with the sequential problem statement that considers the
abstract case where the blockchain is just a data structure and there
are no faults, and we will then introduce the distributed model we
consider and the distributed problem statement.

\begin{definition}[Sequential Problem Statement]\label{def:sps}
  The Verifier satisfies the following properties
  \begin{description}
  \item[Safety.] The \emph{Verifier} never stores a
    header which is not in the
      blockchain.

\item[Liveness.] The \emph{Verifier} receives as input a height $\targeth$
     (not greater than the current height of the blockchain), and
     eventually stores the header of height $\targeth$ of the
     blockchain.
\end{description}
\end{definition}

\paragraph{Distributed Problem Statement}

To address the sequential problem statement, we consider the following
     setup:  The verifier communicates with a full node called
     \emph{primary}.
No assumption is made about the full node (it may be correct or
     faulty).
Communication between the light client and a correct full node is
     reliable and bounded in time.
Reliable communication means that messages are not lost, not
     duplicated, and eventually delivered.
There is a (known) end-to-end delay $\Delta$, such that if a message
     is sent at time $t$ then it is received and processes by time $t
     + \Delta$.
This implies that we need a timeout of at least $2 \Delta$ for a
     query/response communication (e.g., a remote procedure call) to
     ensure that the response of a correct peer arrives before the
     timeout expires.

As we do not assume that the {primary} is correct, no protocol can
     guarantee the combination of the sequential properties.
Thus, in the (unreliable) distributed setting, we consider two kinds
     of termination, \emph{successful} and \emph{failure}, and we will
     specify below under what (favorable) conditions the verifier can terminate successfully, and satisfy the
     requirements of the sequential problem statement.

\paragraph{Variables used by light client verification}

To formalize the problem, we need to define the state space of the
     protocol.
We do so by defining problem variables: the local data structure
     $\lightStore$  contains lightblocks that contain a header.
For each lightblock, we record its verification status, that is,
     whether it is verified.
The local variable $\primary$  contains the PeerID of a full node.
The container $\lightStore$ is initialized with a header
     $\trustedHeader$ that was correctly generated by Tendermint
     consensus.
We use the convention that the status of $\trustedHeader$ is verified.

\begin{definition}[Distributed Problem Statement]\label{def:dps}
The light client satisfies the following properties
  \begin{description}
    \item[safety.]
At all times, every verified header in $\lightStore$
was generated by an instance of Tendermint consensus.

\item[liveness.]
From time to time, a new instance of the verifier is called
with a height $\targeth$. Each instance must eventually
terminate.
 If the {primary} is correct, and $\lightStore$
always contains a verified header whose age is less than the trusting
period,
then the verifier adds a verified header $hd$ with
height $\targeth$ to $\lightStore$ and it
\textbf{terminates successfully}.
  \end{description}
  \end{definition}

These definitions imply that if the primary is faulty, a header may or
     may not be added to $\lightStore$.
The definition allows that verified headers are added to $\lightStore$
     whose height was not passed to the verifier (e.g., intermediate
     headers used in bisection; see Section~\ref{sec:proto}).
Note that for liveness, just initially  having a $\trustedHeader$
     within the trusting period is not sufficient.
For instance, if the trusting period expires before the first message
     round trip with the primary can be completed, the Tendermint
     security model does not provide any guarantees about correct and
     faulty nodes anymore.
After giving the specification of the protocol in
     Section~\ref{sec:proto}, we will discuss some liveness scenarios
     in Section~\ref{sec:liveness}.

\paragraph{Relation of the distributed to the sequential problem}

The specification in Definition~\ref{def:dps} provides a partial
     solution to the sequential specification in
     Definition~\ref{def:sps}.
The solution with respect to safety is complete, even if the primary
     is faulty.
However, we can only guarantee liveness when the primary is
     correct and the verifier has a sufficiently recent trusted
     header.
For these runs distributed liveness implies the sequential liveness.
Ensuring complete liveness (or perhaps just almost sure termination)
     would require us to make additional assumptions about the total
     (expected) number of faulty nodes in the
     network.
The security model imposes such assumptions on validator nodes only,
     which represent only a fraction of all the nodes in the
     Tendermint system (cf.\ Section~\ref{sec:intro}).
Adding incentives and punishment rules to nodes that communicate with
     light clients is subject of current discussion of the community,
     so that we cannot give reasonable additional assumptions in this
     paper.
However, in practice, if a run of the verifier fails, the light client
     may pick a new primary and retry until it reaches a correct
     primary which then ensures liveness. In this regard, it is assumed 
     that light clients have access to at least one correct full node.
 
\section{Light Client Verification}
\label{sec:checks}

The standard way of following the evolution of the blockchain is to
     download block after block and perform sequential verification as
     shown in Figure~\ref{fig:sequ-verif}.
Here we  discuss a verification method that does not
     force a client to download the headers for all blocks of height
     up to $\targeth$ (Definition~\ref{def:dps}).
The outline of the approach is given in Figure~\ref{fig:skip-verif}.
The method consists in asserting that the commit for the header of
     height~$\targeth$ contains the signature of at least one correct
     node.
This can be checked by exploiting the security model of
     Definition~\ref{def:secmodel}.
We have to consider the intersection of the set of validators in the
     commit and of correct nodes in a set $\nvals$ in a previously
     downloaded and trusted block.
By Definition~\ref{def:secmodel}, more than $2/3$ of the voting power
     in $\nvals$ is correct (for some time).
Now,  if the set of validators in a commit have more than $1/3$ of the
     voting power in $\nvals$, these two sets intersect, which implies
     that at least one validator is both (i) correct and (ii) signed
     the commit.
The following proposition establishes the part (i) of this argument,
     and is a direct consequence of Definition~\ref{def:secmodel}.

\begin{proposition}\label{prop:correct}
Given a (trusted) block \emph{tb} of the blockchain, at a real-time
$t$,
a given set of full
nodes \emph{N} contains a correct node, if
\begin{enumerate}
\item $t - \trustp < tb.Time < t$, and
  \item  the voting
power in $tb.\nvals$ of nodes in $N$ is more than $1/3$ of 
$\totalvp(tb.\nvals)$
\end{enumerate}
\end{proposition}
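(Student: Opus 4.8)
The plan is to derive this directly from the Security Model (Definition~\ref{def:secmodel}) together with a counting (pigeonhole) argument on voting powers. Since $tb$ is a block of the blockchain, Definition~\ref{def:secmodel} guarantees a subset $C \subseteq tb.\nvals$ with $\totalvp(C) > \frac{2}{3}\totalvp(tb.\nvals)$ such that every $(n,p) \in C$ satisfies $\correct(n, tb.\btime + \trustp)$. I would first observe that hypothesis~(1), namely $t - \trustp < tb.\btime < t$, implies $t < tb.\btime + \trustp$, so that the correctness guarantee for nodes in $C$ is still in force at the real-time $t$ we care about: every node in $C$ follows the protocol at least until $t$.

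Next I would turn to the intersection argument. Let $N$ be the given set of full nodes, and consider the validator pairs in $tb.\nvals$ whose node lies in $N$; call this set $N'$, so by hypothesis~(2) we have $\totalvp(N') > \frac{1}{3}\totalvp(tb.\nvals)$. Both $C$ and $N'$ are subsets of the validator set $tb.\nvals$, and their voting powers sum to strictly more than $\frac{2}{3}\totalvp(tb.\nvals) + \frac{1}{3}\totalvp(tb.\nvals) = \totalvp(tb.\nvals)$. Since the total voting power of $tb.\nvals$ bounds the voting power of any single subset, $C$ and $N'$ cannot be disjoint (their union would otherwise have voting power exceeding the whole set); hence there is a validator pair $(n,p)$ with $n$ in both. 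That node $n$ belongs to $N$ and, being in $C$, is correct at least until $tb.\btime + \trustp > t$, so in particular $\correct(n,t)$ holds — $N$ contains a correct node, as claimed.

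One technical point I would be careful about is the bookkeeping between \emph{nodes} and \emph{validator pairs}: a node appears in $tb.\nvals$ with a specific voting power, and "the voting power in $tb.\nvals$ of nodes in $N$" should be read as the sum of voting powers of those pairs whose node is in $N$. Assuming (as is implicit in the model) that a given node appears at most once in a validator set, this is unambiguous, and the additivity of $\totalvp$ over disjoint subsets is exactly what makes the $\frac{2}{3} + \frac{1}{3} > 1$ counting go through. The only real subtlety — and the one place the argument could be criticized — is the time alignment in the first step: one must make sure the strict inequalities line up so that $t$ falls strictly inside the trusting window $(\,\cdot\,, tb.\btime + \trustp)$; hypothesis~(1) is stated precisely to secure this. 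I expect this to be the "main obstacle" only in the sense of being the subtle point to state carefully; the combinatorial core is a one-line pigeonhole.
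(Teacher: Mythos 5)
Your proof is correct and matches the paper's approach: the paper presents Proposition~\ref{prop:correct} as ``a direct consequence of Definition~\ref{def:secmodel}'' and sketches exactly the same two ingredients in the preceding paragraph (the $2/3$-correct set from the security model, and the $2/3 + 1/3 > 1$ pigeonhole forcing a nonempty intersection with $N$). Your proposal just spells out that informal argument carefully, including the time bound $t < tb.\btime + \trustp$ that keeps the correctness guarantee active.
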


We now need to make explicit a property of the commits that comes from
     the way commits are computed by Tendermint consensus.
Analysis of the consensus algorithm in~\cite{bkm2018latest}
     immediately shows that a correct validator node only sends
     \texttt{prevote} or \texttt{precommit} messages  if
     \texttt{LastBlockID} of the new (to-be-decided) block is equal to
     the hash of the last block~$\ell$.
This implies that at a time where due to
     Definition~\ref{def:secmodel}, more than two thirds of
     $\ell.\nvals$  are still correct, we can trust a commit that is
     consistent with  $\ell.\nvals$.
Due to this, and by the fact that in the authenticated Byzantine model
     signatures cannot be forged we obtain the following proposition.

\begin{proposition}\label{prop:onthechain}
  Let $b$ be a block, and $c$ a commit. If at real-time~$t$
  \begin{enumerate}
    \item $c$ contains at least one
validator pair $(v,p)$ such that $v$ is correct\dash---that is,
$\mathit{correctUntil}(v,t)$\dash---, and
\item $c$ is contained in $\PossibleCommit(b)$
\end{enumerate}
then the block $b$ is on the blockchain.
\end{proposition}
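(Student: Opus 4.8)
The plan is to unpack the hypothesis through Definition~\ref{def:distrcommit}, use the authenticated Byzantine assumption to promote the single correct signature into a genuine protocol action, and then invoke the behaviour of Tendermint consensus to conclude that $b$ must be the block decided at its height. First I would apply Definition~\ref{def:distrcommit} to the hypothesis $c \in \PossibleCommit(b)$: this yields a common round $r$ such that every element of $c$ is a $\mathsf{precommit}$ vote for height $b.\bheight$, round $r$, carrying $\msgblockid = \fhash(b)$, cast by a validator in $b.\vals$, with $\totalvp(c) > \frac{2}{3}\,\totalvp(b.\vals)$. In particular, the validator pair $(v,p)$ from assumption~(1), for which $\correct(v,t)$ holds, contributes a signed message $\mathsf{precommit}(b.\bheight, r, \fhash(b))$ attributed to $v$.

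Next I would argue this precommit is genuine. By the authenticated Byzantine fault model no node (faulty or correct) can forge $v$'s signature, so the vote in $c$ attributed to $v$ was actually produced by $v$; since $c$ is available at time $t$, that message was produced at some time $\le t$, i.e.\ while $v$ was still following all protocols, because $\correct(v,t)$ holds. Hence $v$ sent $\mathsf{precommit}(b.\bheight, r, \fhash(b))$ as a step of its honest execution of Tendermint consensus for height $b.\bheight$. By the rules of the algorithm (\cite{bkm2018latest}) a correct validator precommits the hash of a block only after it has validated that block as a legitimate proposal of that very consensus instance — in particular with its $\blockid$ field pointing to the correctly decided block of the previous height, and only after observing a prevote quorum for it. Since $\fhash$ uniquely identifies the data it hashes (Definition~\ref{def:func}(\ref{item:hash})), the block $v$ validated and voted for is exactly $b$.

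Finally, combining the genuine precommit of the correct node $v$ with the full precommit quorum of $b.\vals$ for $\fhash(b)$ at the single round $r$ witnessed by $c$, the consensus instance at height $b.\bheight$ decides $b$; and since $\chain$ is by construction the sequence of blocks decided at consecutive heights, $b = \chain[i]$ for the index $i$ with $\chain[i].\bheight = b.\bheight$, i.e.\ $b$ is on the blockchain.

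The delicate point, and the one I expect to be the main obstacle, is precisely this last bridge: deducing that $b$ is the decided (hence on-chain) block from \emph{one} honest precommit together with a quorum that is otherwise unconstrained — note that the generating assumptions of height $b.\bheight$ coming from Definition~\ref{def:secmodel} may long have expired, so we cannot assume the rest of $b.\vals$ is mostly correct \emph{now}. This step has to be discharged at the level of the consensus algorithm of~\cite{bkm2018latest}, not purely from the definitions in this section: the honest vote, via the unforgeability of signatures, is exactly what rules out a ``commit'' fabricated wholesale by Byzantine validators for a block never put to consensus, while the algorithm's validity and locking/decision rules are what identify the block carried by such a genuine commit with the one recorded on $\chain$. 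I would therefore isolate this as an explicit appeal to the guarantees of Tendermint consensus, in the same spirit as the $\blockid$-consistency property used just before the proposition.
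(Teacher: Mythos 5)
Your proposal follows essentially the same route the paper takes, and in fact fills it in with more care than the paper does. The paper offers no displayed proof for Proposition~\ref{prop:onthechain}; the ``argument'' is the informal paragraph just before the statement, which rests on exactly your three ingredients: unpack $c\in\PossibleCommit(b)$ via Definition~\ref{def:distrcommit}; use the authenticated Byzantine assumption to conclude that the one vote attributed to the correct $v$ was really cast by $v$ while it was still honest; and then appeal to the Tendermint consensus algorithm of~\cite{bkm2018latest} --- specifically the fact that a correct node only prevotes/precommits a block whose $\texttt{LastBlockID}$ equals the hash of the preceding decided block --- to conclude that $b$ must be the block decided at its height and hence on $\chain$.

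What you add, and what I would keep, is the explicit acknowledgement at the end: the bridge from \emph{one honest precommit plus an otherwise unconstrained quorum} to \emph{$b$ is the on-chain block} is genuinely a property of the consensus algorithm and not of anything proved in Sections~\ref{sec:blockchain}--\ref{sec:TMFM}. The paper treats it as ``immediate from analysis of~\cite{bkm2018latest}'' and moves on; you are right to flag that the agreement/locking rules of the protocol (together with the fact that Definition~\ref{def:secmodel} guaranteed a correct super-majority of $b.\vals$ \emph{at the time height $b.\bheight$ was executed}, even if that trusting period has since expired) are what actually close this step. One small caution: your phrasing ``the consensus instance at height $b.\bheight$ decides $b$'' from the quorum in $c$ is a little too strong as stated, since the non-$v$ signatures in $c$ may be Byzantine and need not reflect protocol-compliant behaviour; the honest reading is exactly the one you give afterwards, namely that one genuine precommit by a node that was honest when it signed, combined with Tendermint's agreement argument for that height, rules out any block other than the on-chain one from ever attracting such a commit. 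That matches the paper's intent.
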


The following central result is a direct consequence of
Propositions~\ref{prop:correct} and~\ref{prop:onthechain}

\begin{corollary}\label{cor:main}
Given a trusted block
$tb$ and a block $b$ with a commit $c$, at real-time $t$,
if
\begin{enumerate}
\item $t - \trustp < tb.Time < t$, and
  \item  the voting
power in $tb.\nvals$ of nodes in $c$ is more than $1/3$ of 
$\totalvp(tb.\nvals)$,
and
\item $c$ is contained in $\PossibleCommit(b)$,
\end{enumerate}
then the block $b$ is
on the blockchain.
\end{corollary}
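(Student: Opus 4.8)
The plan is to obtain the corollary as a two-step composition of Propositions~\ref{prop:correct} and~\ref{prop:onthechain}, using the set of nodes appearing in the commit $c$ as the bridge between them. No new machinery is needed; the real work has already been done in Definition~\ref{def:secmodel} and in the analysis of Tendermint consensus underlying Proposition~\ref{prop:onthechain}.

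First I would let $N$ be the set of full nodes whose signed \texttt{precommit} votes occur in $c$. Then hypothesis~(1) of the corollary is literally hypothesis~(1) of Proposition~\ref{prop:correct} for the trusted block $tb$, and hypothesis~(2) of the corollary\dash---read as ``$\totalvp$ of the restriction of $tb.\nvals$ to the peer IDs occurring in $c$ exceeds $\tfrac13\totalvp(tb.\nvals)$''\dash---is precisely hypothesis~(2) of Proposition~\ref{prop:correct} applied to this $N$ (peer IDs in $c$ that do not appear in $tb.\nvals$ contribute $0$, so the two quantities agree). Applying Proposition~\ref{prop:correct} therefore yields a node $v \in N$ with $\correct(v,t)$.

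Second I would feed this into Proposition~\ref{prop:onthechain}. Since $v \in N$, the commit $c$ contains a \texttt{precommit} vote signed by $v$; together with $v$'s voting power this exhibits a validator pair $(v,p)$ in $c$ with $v$ correct at real-time $t$, which is hypothesis~(1) of Proposition~\ref{prop:onthechain}. Hypothesis~(2) of that proposition, namely $c \in \PossibleCommit(b)$, is exactly hypothesis~(3) of the corollary. Hence Proposition~\ref{prop:onthechain} gives that $b$ is on the blockchain, as claimed.

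The only point that deserves care is the identification made in the first step: one must be explicit that ``the voting power in $tb.\nvals$ of nodes in $c$'' is the same quantity that Proposition~\ref{prop:correct} takes as input for $N$, and that membership of $v$ in $N$ really means $v$'s own signature is present in $c$ (legitimate by the authenticated-Byzantine assumption, since signatures cannot be forged). With that bookkeeping in place the argument is a direct chaining of the two propositions, so I do not anticipate a substantive obstacle.
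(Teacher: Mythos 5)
Your proposal is correct and matches the paper's argument: the paper simply asserts that Corollary~\ref{cor:main} is a direct consequence of Propositions~\ref{prop:correct} and~\ref{prop:onthechain}, and your chaining (take $N$ to be the nodes whose votes appear in $c$, apply Proposition~\ref{prop:correct} to extract a correct signer, then invoke Proposition~\ref{prop:onthechain}) is exactly the intended composition, merely spelled out in more detail than the paper bothers to.
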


As a result we need not resort to sequential verification, but can use
     the current time, and a block for which we have previously
     established trust to extend the trust to a new block.
However, if the preconditions Corollary~\ref{cor:main}(1--3) are not
     satisfied, this does not imply that $b$ is forged.
It might be that between $tb$ and $b$ the validator set has changed
     too much to ensure a sufficiently large intersection.
The protocol in the following section will then download an
     intermediate header whose height lies between $tb.Height$ and
     $b.Height$ and tries to get trust in the intermediate header and
     use this to eventually verify $b$.
 
\section{Protocol Description}
\label{sec:proto}

\begin{figure}[t]
    
\begin{lstlisting}[language=Go]
func VerifyToTarget(primary PeerID, lightStore LightStore, targetHeight Height) (LightStore, Result) {

nextHeight := targetHeight

for lightStore.LatestVerified().header.height < targetHeight {

    // Get Light Block
    current, found := lightStore.Get(nextHeight)
    if !found {
        current = FetchLightBlock(primary, nextHeight)
        lightStore.Update(current, StateUnverified)
    }

    // Verify
    verdict = ValidAndVerified(lightStore.LatestVerified(), current)
    
    // Decide where to continue
    if verdict == OK {
        lightStore.Update(current, StateVerified) <@\label{line:verf}@>
    }
    else if verdict == CANNOT_VERIFY {
      // do nothing. the light block current passed validation, 
      // but the validator set is too different to verify it.
      // We keep the state of current at StateUnverified. For a
      // later iteration, Schedule might decide to try
      // verification of that light block again.
    }    
    else { 
        // verdict is some error code
        lightStore.Update(current, StateFailed)
        return (lightStore, ResultFailure)
    } 
    nextHeight = Schedule(lightStore, nextHeight, targetHeight)
}
return (lightStore, ResultSuccess)
}
\end{lstlisting}
   \caption{Light Client Verification Main Function}
  \label{fig:main}
\end{figure}

The  basic data structure of our verification protocol is the
     $\lightStore$ which is a container for the so-called lightblocks,
     which correspond to the headers from Definition~\ref{def:header}.
In the implementation, lightblocks contain actual validator sets,
     while headers in the Tendermint implementation only contain
     hashes of these sets:     

\begin{definition}[Lightblock] The core data structure of the protocol
  is the \texttt{LightBlock}. It consists of the following fields:
\begin{itemize}
\item Header
\item Commit
  \item Validators
  \item NextValidators
  \item Provider
\end{itemize}
\end{definition}

The $\lightStore$ is a data structure that stores such lightblocks,
together with their state. The states are from the set $\{
StateUnverified,$ $StateVerified, StateFailed\}$.
The LightStore exposes the following functions to query stored
lightblocks.
\begin{description}
\item[\texttt{Get(height\ Height)\ (LightBlock,\ bool)}]
   returns a lightblock at a given height or
false in the second argument if the LightStore does not contain the
specified lightblock.

\item[\texttt{LatestVerified()\ LightBlock}]
  returns the highest verified lightblock.

\item[\texttt{Update(lightBlock\ LightBlock,\ v State)}]
    The\\ state of the lightblock is set to
\texttt{v}.
\end{description}

Our light client protocol is depicted in Figure~\ref{fig:main}.
It
gets as input
\begin{itemize}

\item
  $\lightStore$: a container that stores light  blocks that have been downloaded and
  that passed verification. Initially it contains a lightblock with
  $\trustedHeader$. As the function can be called multiple times,
  the lightStore may contain more lightblocks that have been
  downloaded and verified so far.
  
\item
  $primary$: the address (peerID) of the full node that the verification
  queries for blocks
\item
  $\targeth$: the height of the needed header. 
\end{itemize}

In this paper we consider the case where $\targeth$ is greater than
(or equal to)\linebreak $\lightStore.\LatestVerified().Header.Height$ as it is the
     most interesting.
In the other case there are two options, either there is a trusted
     lightblock (within the trusting period) with height less than
     $\targeth$, and we can use the same method as described here from
     that block, or we need to download all headers between $\targeth$
     and the height of a trusted lightblock in store and just check
     hashes in $\blockid$ in decreasing order of heights.

The function uses two auxiliary variables, namely $\nextHeight$, 
     which should be thought of as the
     ``height of the next header we need to download and verify'', and
     $\current$, the header that is currently under
     verification. $\nextHeight$ is initialized to $\targeth$.
     Then the protocol enters a loop that consists of
     the following three stages:

\begin{description}
\item[Get Lightblock] here a lightblock is assigned to $\current$. If the
  required lightblock had been downloaded before then it is taken from the
  $lightStore$, otherwise
  \texttt{FetchLightBlock} is called to download a lightblock 
  of a given height from the primary. This function is the only one
  that communicates with another node in the system.
\item[Verify]
  \texttt{ValidAndVerified} is local code that checks the lightblock. It
  encodes the checks of Corollary~\ref{cor:main} or, if sequential
  lightblocks should be verified falls back to standard sequential
  verification (cf.~Figure~\ref{fig:sequ-verif}). If it can verify
  $\current$ it returns \texttt{OK}. If the precondition of
  Corollary~\ref{cor:main} is violated (but otherwise  $\current$ is
  well-formed) it returns \texttt{CANNOT\_VERIFY}. Otherwise, that is,
  if the $\current$ has been proven to be corrupted, it returns an
  error code.
\item[Decide where to continue]
  \texttt{Schedule} decides which height to try to verify next. We keep
  this underspecified as different implementations (currently in Golang
  and Rust) may implement different optimizations here. 
\end{description}

For \texttt{Schedule},  we 
  provide the following
  necessary conditions on how the height may evolve: \texttt{Schedule}
  returns $H$ s.t.
  \begin{enumerate}
      \item[(S1)]\label{itm:S1} if
$lightStore.\LatestVerified().Header.Height = \nextHeight$ and\\
$lightStore.\LatestVerified().Header.Height < \targeth$ then\\
$\nextHeight < H \le  targetHeight$
\item[(S2)]\label{itm:S2} if
$lightStore.\LatestVerified().Header.Height < \nextHeight$ and\\
$lightStore.\LatestVerified().Header.Height < \targeth$ then\\
$lightStore.\LatestVerified().Header.Height < H <
  \nextHeight$
\item[(S3)]\label{itm:S3} if $lightStore.\LatestVerified().Header.Height = targetHeight$
then
$H = targetHeight$
  \end{enumerate}

  Case~(S1) captures the case where the lightblock at height
     $\nextHeight$ has been verified, and we can choose a height closer
     to  $\targeth$.
As \texttt{Schedule} gets the $lightStore$ as parameter, the choice
     of the next height can depend on the $lightStore$, e.g., we can
     pick a height for which we have already downloaded a lightblock.
Case (S3) is a special case when we have verified $\targeth$.
In Case (S2) the lightblock of $\nextHeight$ could not be verified, and we
     need to pick a smaller height.

\paragraph{Invariant}
The implementation enforces the invariant that it is always the case
that
\begin{equation}
  \lightStore.\LatestVerified().Header.Time > now -
  \trustp.
  \label{eq:inv}
  \end{equation}
If the invariant is violated, the light client does not have a lightblock
     it can trust and it terminates with failure.
A trusted lightblock must be obtained externally, its trust can only be
     based on social consensus.

\subsection{Correctness}

     \begin{proposition} The protocol satisfies safety.
\end{proposition}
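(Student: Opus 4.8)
The plan is to prove the \textbf{safety} clause of Definition~\ref{def:dps} -- at all times every header carrying status \texttt{StateVerified} in $\lightStore$ was generated by an instance of Tendermint consensus, equivalently occurs in $\chain$ for the run under consideration -- by induction over the execution of Figure~\ref{fig:main}. First I would pin down the only two ways a header can acquire the status \texttt{StateVerified}: it is the initial $\trustedHeader$ with which $\lightStore$ is initialized, which by the setup in Section~\ref{sec:problem} was correctly generated by Tendermint consensus and hence lies in $\chain$; or it is written by the single statement \texttt{Update(current, StateVerified)} on line~\ref{line:verf}, which is executed only when \texttt{ValidAndVerified($\lightStore.\LatestVerified()$, $\current$)} returns \texttt{OK}. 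No other line of Figure~\ref{fig:main}, across all invocations of the verifier, ever sets a header's status to verified, and since $\chain$ only grows, membership in $\chain$ is stable over time. This disposes of the base case and reduces the claim to a statement about line~\ref{line:verf}.

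For the inductive step I would assume that at the current point in the execution every verified header in $\lightStore$ lies in $\chain$, and consider the next execution of line~\ref{line:verf}, which adds $\current$ as verified. Let $tb := \lightStore.\LatestVerified().\mathit{Header}$ at that moment; since $\LatestVerified()$ returns a verified lightblock, $tb \in \chain$ by the induction hypothesis. Because \texttt{ValidAndVerified} returned \texttt{OK}, its internal checks -- which encode the hypotheses of Corollary~\ref{cor:main}, with the sequential-verification fallback of Figure~\ref{fig:sequ-verif} being the special case where the voting-power overlap is more than $2/3$ rather than merely more than $1/3$, hence a fortiori covered -- hold for trusted block $tb$, block $b := \current.\mathit{Header}$, and commit $c := \current.\mathit{Commit}$. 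Concretely: (i) $now - \trustp < tb.\btime < now$, where the first inequality is the enforced invariant~\eqref{eq:inv} and the second holds because $tb$ was produced in the past; (ii) the voting power in $tb.\nvals$ of the validators occurring in $c$ exceeds $\frac{1}{3}\totalvp(tb.\nvals)$; and (iii) $c \in \PossibleCommit(b)$. Corollary~\ref{cor:main} -- that is, Propositions~\ref{prop:correct} and~\ref{prop:onthechain} applied in sequence -- then yields $b \in \chain$, so the invariant is preserved. This closes the induction and the proof.

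The hard part will not be the induction skeleton but making the phrase ``\texttt{ValidAndVerified} encodes the checks of Corollary~\ref{cor:main}'' rigorous: one must spell out how the concrete lightblock fields (its $\mathit{Commit}$, $\mathit{Validators}$, $\mathit{NextValidators}$, and the hashes inside the header) are matched against the abstract objects $\PossibleCommit$, $\totalvp$, and $\fmatchhash$, and check that an \texttt{OK} verdict genuinely certifies both $c \in \PossibleCommit(b)$ and the $\frac{1}{3}$ voting-power bound \emph{relative to} $tb.\nvals$ rather than merely relative to $b.\vals$. This is exactly where the authenticated-Byzantine assumption (no node can forge signatures), already built into Proposition~\ref{prop:onthechain}, carries the weight, since it is what prevents a faulty primary from passing off a commit for a header not in $\chain$. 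I would also record explicitly that safety requires no assumption on the primary: a faulty primary can at worst make \texttt{ValidAndVerified} reject (yielding \texttt{StateFailed} and \texttt{ResultFailure}, adding nothing verified) or supply a header that Corollary~\ref{cor:main} certifies is genuinely on the chain.
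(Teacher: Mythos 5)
Your proof is correct and follows the same core idea as the paper's: the status \texttt{StateVerified} is only ever assigned on line~\ref{line:verf}, which is guarded by an \texttt{OK} verdict from \texttt{ValidAndVerified}, and that verdict certifies the hypotheses of Corollary~\ref{cor:main}, whence the header is on $\chain$. Your write-up is more explicit than the paper's one-sentence argument (you spell out the induction over executions of line~\ref{line:verf}, with $\trustedHeader$ as base case and stability of $\chain$-membership handling reentrant calls), but the only substantive difference in decomposition is your treatment of the sequential fallback. The paper handles sequential verification as a \emph{separate} case in which Definition~\ref{def:distrcommit} is checked; you instead observe that because the adjacent-height check forces $tb.\nvals = b.\vals$ and the precondition requires more than $2/3$ of $b.\vals$ in the commit, the overlap with $tb.\nvals$ exceeds $2/3 > 1/3$, so Corollary~\ref{cor:main} applies a fortiori. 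This unification is valid (the trusting-period guard on $tb$ is checked in both branches of the implementation, so Proposition~\ref{prop:correct} is available), and it is arguably tidier: it makes explicit that even the sequential case relies on the trusting period for $tb$ in the adversarial setting, a dependency the paper's phrasing leaves implicit by referring only to Definition~\ref{def:distrcommit}, which alone does not imply that any signer of the commit is correct.
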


     \begin{proof}
It is sufficient to remark, that a lightblock is marked as
verified in line~\ref{line:verf} if \texttt{ValidAndVerified} returned
\texttt{OK}, which is the case only if the preconditions of
Corollary~\ref{cor:main}  are satisfied (in which case safety is
ensured by the corollary) or if we fall back to sequential
verification in which case Definition~\ref{def:distrcommit} is checked.
     \end{proof}

     \begin{proposition} The protocol satisfies liveness.
\end{proposition}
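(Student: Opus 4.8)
The plan is to establish two things: (a) each invocation of `VerifyToTarget` terminates, and (b) under the favorable hypotheses of Definition~\ref{def:dps} (the primary is correct and the invariant~\eqref{eq:inv} holds throughout), it terminates successfully, i.e.\ a verified header of height $\targeth$ is added to $\lightStore$. For termination, I would argue by a measure-decreasing argument on the loop. The loop condition is $\lightStore.\LatestVerified().Header.Height < \targeth$. In each iteration, after the `Verify` stage one of three things happens: on `OK`, the height of $\LatestVerified()$ strictly increases (since $\current$ was at height $\nextHeight$ and, by conditions (S1) and the fact that we only moved upward, $\nextHeight > $ the previous latest verified height — here I'd invoke (S2) to see that whenever we are in the ``below'' regime the next height chosen is strictly between the latest-verified height and $\nextHeight$, so $\nextHeight$ strictly decreases while staying above the latest verified height); on `CANNOT_VERIFY`, the pair $(\LatestVerified().Header.Height,\ \nextHeight)$ still decreases in the second component by (S2); on the error branch we return immediately. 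So the lexicographic measure $(\targeth - \LatestVerified().Header.Height,\ \nextHeight - \LatestVerified().Header.Height)$ strictly decreases over integers bounded below, giving termination. For the case $\targeth \le \LatestVerified().Header.Height$ the loop body is never entered and we return successfully at once.

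Next, for successful termination under the favorable hypotheses, I would show the error branch is never taken and `CANNOT_VERIFY` cannot occur infinitely (equivalently, given termination, that we exit via `ResultSuccess` with a verified header of height $\targeth$). Since the primary is correct, `FetchLightBlock` returns a genuine lightblock from the blockchain, so `ValidAndVerified` never detects corruption — here I'd appeal to Proposition~\ref{prop:correct}/\ref{prop:onthechain} and Corollary~\ref{cor:main}: a well-formed block on the chain never triggers the error code. Thus the verdict is always `OK` or `CANNOT_VERIFY`. Combined with termination, the loop must exit through the `for` condition failing, i.e.\ $\LatestVerified().Header.Height \ge \targeth$; but (S1)/(S3) ensure the chosen heights never overshoot $\targeth$, so in fact $\LatestVerified().Header.Height = \targeth$, and $\LatestVerified()$ is verified — exactly successful termination with header $hd$ of height $\targeth$ in $\lightStore$.

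The one remaining gap is arguing that the `CANNOT_VERIFY` branch does not cause the loop to stall without progress toward $\targeth$ — i.e.\ that bisection actually converges. This is where the invariant~\eqref{eq:inv} and the structure of `Schedule` do the real work: whenever $\LatestVerified()$ is within the trusting period (guaranteed by the invariant) and we pick an intermediate height by (S2), we keep halving the gap between the latest verified height and $\nextHeight$; since these are integers, after finitely many `CANNOT_VERIFY` steps we reach $\nextHeight = \LatestVerified().Header.Height + 1$, at which point `ValidAndVerified` falls back to sequential verification (Figure~\ref{fig:sequ-verif}), which succeeds because adjacent blocks on the chain always satisfy Definition~\ref{def:distrcommit}. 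So progress is always eventually made, $\LatestVerified()$ strictly advances, and we reach $\targeth$.

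\textbf{Main obstacle.} The subtle point is the interplay between the three `Schedule` conditions and the two-component measure: one must be careful that after a `CANNOT_VERIFY` we are in the regime $\LatestVerified().Header.Height < \nextHeight$ so that (S2) applies, and after an `OK` we either finish or re-enter (S1) with a strictly larger $\LatestVerified()$; checking that these cases are exhaustive and that the measure genuinely decreases in each — especially that the sequential-verification fallback at gap $1$ is always reachable and always succeeds on the real chain — is the delicate bookkeeping. The cryptographic/security content is entirely delegated to Corollary~\ref{cor:main}; the liveness proof itself is a well-foundedness argument about the scheduling loop.
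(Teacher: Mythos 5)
Your proof is correct and is more rigorous than the paper's own. The paper argues by case distinction on whether the primary is correct or faulty, and in each branch gives an informal argument that the protocol eventually returns (``eventually every lightblock will be verified,'' or ``the function returns with the lightblock'') without exhibiting a termination measure. You instead decompose the claim into (a) an unconditional termination argument via a lexicographic measure $\bigl(\targeth - \LatestVerified().Header.Height,\ \nextHeight - \LatestVerified().Header.Height\bigr)$ that exploits the \texttt{Schedule} conditions (S1)--(S3), and (b) a separate argument that, under the favorable hypotheses of Definition~\ref{def:dps} and Invariant~(\ref{eq:inv}), the terminating state must be a successful one, delegating the cryptographic/security content to Corollary~\ref{cor:main}. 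This buys you a cleaner separation of concerns, makes explicit the well-foundedness that the paper only gestures at, and also explains the quadratic worst-case iteration bound $T(\delta)$ noted in the \tlap{} section. The one small imprecision is your phrase ``we keep halving the gap'': \texttt{Schedule} is deliberately underspecified and need not bisect; only the strict decrease guaranteed by (S2) is needed, and your measure argument does not in fact rely on halving. Your own flagged ``remaining gap'' is indeed closed correctly by the observation that at $\nextHeight = \LatestVerified().Header.Height + 1$ the check degenerates to sequential verification, which succeeds on a chain block within the trusting period.
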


     \begin{proof}
We proof by case distinction regarding the primary:

\begin{description}

\item[If the {primary} is correct] then

  \begin{itemize}
  
  \item
    \texttt{FetchLightBlock} will always return a lightblock consistent
    with the blockchain
  \item
    \texttt{ValidAndVerified} will verify a correct lightblock once a
    sufficiently recent lower lightblock can be verified.
  \item
    If Invariant~(\ref{eq:inv}) holds,
    eventually every lightblock will be verified and core verification
    \textbf{terminates successfully}.
  \item
    As by Definition~\ref{def:dps}, if the primary is correct,
    for liveness we are restricted to
   the case when  Invariant~(\ref{eq:inv}) holds, we concludes the proof.
  \end{itemize}

\item[  If the {primary} is faulty] then there are three cases:

  \begin{itemize}
  
  \item
    it either provides lightblocks in time that pass all the tests, and
    the function returns
    with the lightblock
  \item
    or it provides one lightblock that fails a test, and the function
    \textbf{terminates with failure}.
    \item or it is too slow in (or stops) providing lightblocks, such that
      eventually 
      Invariant~(\ref{eq:inv}) is discovered to be violated,
      and the protocol terminates
      with failure.
  \end{itemize}
\end{description}
This concludes the liveness argument.
\end{proof}

\subsection{Liveness Scenarios} \label{sec:liveness}
     
The simplicity in the above liveness proofs is due to
     Invariant~(\ref{eq:inv}).
The problem definition allows that a protocol does nothing: Once the
     invariant is violated we are allowed to terminate with a failure.
Successful termination depends on the age of
     $lightStore.\LatestVerified()$ (for instance, initially on the age
     of $\trustedHeader$) and the changes of the validator sets on the
     blockchain.
We will now give some examples.

Let $\startHeader$ be $\lightStore.\LatestVerified()$ when core
verification is called (e.g., $\trustedHeader$) and $\startTime$ be
the time the verifier is invoked.

In order to ensure liveness, $\lightStore$ always needs to contain a
verified (or initially trusted) lightblock whose time is within the trusting
period. To ensure this, the verifier needs to add new lightblocks to
$\lightStore$ and verify them, before all lightblocks in
$\lightStore$ expire.

\paragraph{Many changes in validator set}

Let's consider \texttt{Scheduler} implements bisection, that is, it halves
the distance. Assume the case where the validator set changes completely
in each block. Then the method in this specification needs to
sequentially verify all lightblocks. That is, for
$W = \log_2 (\targeth - \startHeader.\bheight)$,
$W$ lightblocks need to be downloaded and checked before the lightblock of
height $\startHeader.Height + 1$ is added to $\lightStore$.

\begin{itemize}

\item
  Let $Comp$ be the local computation time needed to check lightblocks
  and signatures for one lightblock.
\item
  Then we need in the worst case $Comp + 2 \Delta$ to download and
  check one lightblock.
\item
  Then the first time a verified lightblock could be added to
  $\lightStore$ is $\startTime + W  (Comp + 2 \Delta)$.
\item
   However, it can only be added if we still have a lightblock in
  $\lightStore$, which is not expired, that is only the case if

  \begin{itemize}
  
  \item
    $\startHeader.Time < \startTime + W  (Comp + 2 \Delta) -
    \trustp$,
  \item
    that is, if core verification is started at
    $\startTime < \startHeader.Time + \trustp - W 
    (Comp + 2 \Delta)$
  \end{itemize}

\end{itemize}

Starting from the above argument  one may then do an inductive
     argument from this point on, depending on the implementation of
     \texttt{Schedule}.
We may have to account for the lightblocks that are already downloaded,
     but they are checked against the new
     $\lightStore.\LatestVerified()$.

We observe that the worst case time it needs to verify the lightblock of
     height $\targeth$ depends mainly on how frequent the validator
     set on the blockchain changes.
That the verifier terminates successfully crucially depends on
     the check that the lightblocks in $\lightStore$ do not expire in the
     time needed to download more lightblocks, which depends on the
     creation time of the lightblocks in $\lightStore$.
That is, termination of the verifier is highly depending on the
     data stored in the blockchain.
The current light client  verifier protocol exploits that in practice
     changes in the validator set are rare.
For instance, consider the following scenario.

\paragraph{No change in validator set}

Assume that on the blockchain the validator set of the block at height
     $\targeth$ is equal to $\startHeader.\nvals$.
Then there is one round trip in \texttt{FetchLightBlock} to download
     the lightblock of height $\targeth$, and $Comp$ to check it.
As the validator sets are equal, \texttt{ValidAndVerified} returns
     \texttt{OK}, if $\startHeader.Time > now - \trustp$.
That is, if $\startTime < \startHeader.Header.Time + \trustp -
     2 \Delta - Comp$, then the verifier terminates successfully.

\section{Formalization in \tlap{}}
\label{sec:tla}

As part of our formalization efforts, we have specified the light client
    protocol in~\tlap{} and checked its properties with the symbolic model
    checker~\textsc{Apalache}~\cite{KKT19}. 
We found that~\tlap{} allows us to express the protocol at the level that is
    quite close to the mathematical description, which we provide in
    Sections~\ref{sec:blockchain}--\ref{sec:proto}. 
In addition to having a machine-readable protocol specification, we were able
    to produce non-trivial system executions with the model checker as well as
    verify the protocol properties for small parameter values.
The complete specification can be found in Appendix~\ref{sec:tla-complete}.
In this section, we highlight non-obvious decisions about our specification.

Our~\tlap{} specification consists of several building blocks:
    the reference chain, the primary model, and the protocol specification.
The reference chain is populated before the light client runs.
Depending on whether the primary peer is correct or faulty,
 the communication with the primary is modelled as a non-deterministic action
    that either copies blocks from the reference chain, or it produces
    corrupted blocks.

Our specification has five parameters:
\begin{lstlisting}[language=tla]
CONSTANTS
  AllNodes,           \* a set of potential validators
  IS_PRIMARY_CORRECT, \* is primary correct (a Boolean)
  TRUSTING_PERIOD,  \* trusting period in discrete time units
  TRUSTED_HEIGHT,   \* the starting height of the client
  TARGET_HEIGHT     \* the goal height of the client
\end{lstlisting}

By fixing the specification parameters, we can verify the protocol properties
    with the model checker and observe counterexamples to the false hypotheses.

\subsection{Specifying the reference chain}

The reference chain is simply a function from block heights to the lightblocks. 
Since the model checker supports only finite sets, we limit the domain of this
    function to the set $1..\textsf{TARGET\_HEIGHT}+1$. 
To this end, we first define the sets of block headers and lightblocks:

\begin{lstlisting}[language=tla]
BlockHeaders $\tladef$[ height: 1.. (TARGET_HEIGHT + 1),
                 time: Int,
                 VS: SUBSET AllNodes,
                 NextVS: SUBSET AllNodes,
                 lastCommit: SUBSET AllNodes ]
LightBlocks $\tladef$
  [header: BlockHeaders, Commits: SUBSET AllNodes]
\end{lstlisting}

In~\tlap{}, notation~$[a: A, \dots, z: Z]$ defines the set of records whose
    fields~$a,\dots,z$ are restricted to the sets~$A, \dots, Z$, respectively.
Moreover, $\textsc{SUBSET}~X$ defines the powerset of a set~$X$.

Several comments about the block headers are in order. 
First, we model timestamps as integers, leaving the time resolution up to
    the user's interpretation. 
Second, we do not explicitly model digital signatures and thus model the
    validator sets and commits as subsets of~\textsf{AllNodes}. 
Third, we omit hashes and instead limit the power of faulty peers in the peer
    model.
Although we could add hashes in the specification, we found that they do not
    improve the protocol understanding, as they are an implementation detail.
Fourth, we restrict  voting powers to~$\{0, 1\}$; otherwise, we would
    have to use multisets instead of sets. 
(One can model a validator with a voting power of~$k$ with~$k$ validators
    with the voting power of~1.)

We define the predicate~\textsf{InitToHeight} that restricts the
    function~\textsf{blockchain} as per Definitions~\ref{def:basicinv}
    and~\ref{def:secmodel}. 
This predicate also non-deterministically selects a set of faulty
    validators~$\mathsf{Faulty} \subseteq \mathsf{AllNodes}$ and a value of the
    global clock~$\mathsf{now}$, which must be above the timestamp of the
    last block: $\mathsf{now} \ge
    blockchain[1+\mathsf{TARGET\_HEIGHT}].time$.
Interestingly, we model a global clock as an integer variable, by following
    the Lamport's approach~\cite{lamport2005real-time}.

\subsection{Specifying the primary and light client}

The light client maintains the following state variables:

\begin{lstlisting}[language=tla]
VARIABLES state, nextHeight,
  fetchedLightBlocks, lightBlockStatus, latestVerified
\end{lstlisting}

These variables are similar to those in Figure~\ref{fig:main}. 
The variable~$\mathsf{state}$ encodes the
    progress of the light client and ranges over $\{``\mathit{working}",$
    $``\mathit{finishedSuccess}", ``\mathit{finishedFailure}"\}$.
The variable~$\mathsf{nextHeight}$ is as in
    Figure~\ref{fig:main}. 
The other three variables model the lightstore:
The variable~$\mathsf{fetchedLightBlocks}$ is a function from a subset of
    heights to~$\mathsf{LightBlocks}$, which maintains the lightblocks
    received from the primary; $\mathsf{lightBlockStatus}$ maps those heights
    to the states ``\textsf{StateVerified}'', ``\textsf{StateUnverified}'',
    and ``\textsf{StateFailed}''. 
Finally, \textsf{latestVerified} maintains a copy of the latest verified block.

We encode a system transition with the predicate~\textsf{Next} as follows:

\begin{lstlisting}[language=tla]
Next $\tladef$
 $\land$ state = "working" $\land$ (VerifyToTargetDone $\lor$ VerifyToTargetLoop)
 $\land$ $\exists$ t $\in$ Int: t $\ge$ now $\land$ now' = t
 $\land$ UNCHANGED $\langle$blockchain, Faulty$\rangle$
\end{lstlisting}

In~\textsf{Next}, the light client either performs one iteration of the loop in
    Figure~\ref{fig:main} (by performing action~\textsf{VerifyToTargetLoop}),
    or terminates the loop (by performing action~\textsf{VerifyToTargetDone}).
Simultaneously, the global clock~\textsf{now} advances by a non-negative value.
We omit the details of~\textsf{VerifyToTargetLoop}
    and~\textsf{VerifyToTargetDone}, as they closely follow the code
    in Figure~\ref{fig:main}.

\begin{table*}
  \caption{\boldmath{Model checking experiments with~\textsc{Apalache}. A configuration $n/k/B$
    represents $n$ validator nodes and $k$ blocks.
    The primary is correct when $B=C$, and the primary is faulty when $B=F$.
    }}
  \label{tab:mc}
      \begin{tabular}{l|lr|lr|lr|lr|lr}
        \hline
            \tbh{Property}
                   & \multicolumn{2}{c|}{\tbh{4/3/C}}
                   & \multicolumn{2}{c|}{\tbh{4/3/F}}
                   & \multicolumn{2}{c|}{\tbh{5/5/C}}
                   & \multicolumn{2}{c|}{\tbh{5/5/F}}
                   & \multicolumn{2}{c}{\tbh{7/5/F}}
            \\\hline
            & \tbh{result} & \tbh{time}
            & \tbh{result} & \tbh{time}
            & \tbh{result} & \tbh{time}
            & \tbh{result} & \tbh{time}
            & \tbh{result} & \tbh{time}
            \\\hline
            \tbh{PositiveBeforeTrustedHeaderExpires}
            & \bmark{1} & 9s
            & \bmark{1} & 9s
            & \bmark{1} & 6s
            & \bmark{1} & 6s
            & \bmark{1} & 8s
            \\
            \tbh{Correctness}
                & \bcheck{4} & 9s
                & \bcheck{4} & 9s
                & \bcheck{11} & 3m46s
                & \bcheck{11} & 5m28s
                & \bcheck{11} & 13m20s
            \\
            \tbh{Precision}
                & \bcheck{4} & 9s
                & \bcheck{4} & 8s
                & \bcheck{11} & 2m38s
                & \bcheck{11} & 2m51s
                & \bcheck{11} & 4m35s
            \\
            \tbh{SuccessOnCorrPrimaryAndChainOfTrust}
                & \bcheck{4} & 9s
                & \bcheck{4} & 8s
                & \bcheck{11} & 2m48s
                & \bcheck{11} & 2m2s
                & \bcheck{11} & 3m28s
            \\
            \tbh{NoFailedBlocksOnSuccess}
                & \bcheck{4} & 9s
                & \bcheck{4} & 10s
                & \bcheck{11} & 2m14s
                & \bcheck{11} & 2m4s
                & \bcheck{11} & 3m25s
            \\
            \tbh{StoredHeadersAreVerifiedOrNotTrusted}
                & \bcheck{4} & 10s
                & \bcheck{4} & 10s
                & \bmark{4} & 17s
                & \bmark{4} & 16s
                & \bmark{4} & 23s
            \\
            \tbh{CorrectPrimaryAndTimeliness}
                & \bcheck{4} & 9s
                & \bcheck{4} & 8s
                & \bcheck{11} & 2m46s
                & \bcheck{11} & 2m
                & \bcheck{11} & 3m10s
            \\
            \tbh{Complexity}
                & \bcheck{4} & 9s
                & \bcheck{4} & 8s
                & \bcheck{11} & 2m8s
                & \bcheck{11} & 2m4s
                & \bcheck{11} & 3m52s
            \\\hline
     \end{tabular}

 \end{table*}

The behavior of a primary node is captured by the operator
    \textsf{FetchLightBlockInto}, which is shown below:

\begin{lstlisting}[language=tla]
CopyLightBlockFromChain(block, height) $\tladef$
  LET refBlock $\tladef$ blockchain[height] IN
  LET lastCommit $\tladef$ blockchain[height + 1].lastCommit IN
  block = [header $\mapsto$ refBlock, Commits $\mapsto$ lastCommit]      

IsLightBlockAllowedByDigitalSignatures(height, block) $\tladef$ 
  \* either the block is produced by consensus
  \*  (enforced by hashes), while commits are not restricted
  $\lor$ block.header = blockchain[height]
  \* or the block is signed only by the faulty validators
  $\lor$ block.Commits $\subseteq$ Faulty $\land$ block.header.height = height

FetchLightBlockInto(block, height) $\tladef$
  IF IS_PRIMARY_CORRECT
  THEN CopyLightBlockFromChain(block, height)
  ELSE IsLightBlockAllowedByDigitalSignatures(height, block)
\end{lstlisting}

In this code, a correct primary simply copies the block header and the
    respective commit from the reference chain. 
A faulty peer has more freedom, which is restricted
    with the predicate~\textsf{IsLightBlockAllowedByDigitalSignatures}. 
Like a correct primary, it can also produce a sound lightblock. 
Additionally, it may produce a sound block header, but an incorrect set of
commits. 
Alternatively, if the block header is different from the block on the reference
    chain, then it may be signed only by the faulty validators.

\subsection{Model Checking Experiments}

Our main goal in the verification efforts is to prove safety and liveness of
    the protocol as per Definition~\ref{def:dps}.
We have formalized the safety property of Definition~\ref{def:dps} as a state invariant
    called~\textsf{Correctness}.
Assuming that the protocol terminates, the liveness property of Definition~\ref{def:dps}
    can also be written as a state invariant, which describes the state upon termination
    (successful or not).
We call this state invariant \textsf{SuccessOnCorrPrimaryAndChainOfTrust}.

We perform bounded model checking with~\textsc{Apalache}, which explores
    executions up to a given length. 
Although this activity can produce counterexamples, it does not guarantee
    absence of bugs. 
Interestingly, the light client always terminates in a fixed number of steps
    that depends on the difference between the target height and the trusted
    height. 
If we call this difference~$\delta$, then the protocol should terminate in no
    more than $T(\delta)=\frac{\delta \cdot (\delta - 1)}{2}$ iterations. 
This is the worst-case bound for conditions (S1)--(S3), and a concrete
    implementation may schedule block queries more optimally, e.g., a
    worst-case linear bound or an expected sublinear bound.

To check the complexity  bound, we have written a state invariant called
    \textsf{Complexity}, which tests that the protocol does not go over the
    worst-case bound. 
When we fix the specification parameters, \textsc{Apalache} finds a deadlock
    for the computations longer than $T(\delta)$. 
Together with~\textsf{Complexity}, this gives us a termination argument (for
    fixed parameters).
Hence, it suffices to check the above invariants.     

To improve our understanding of the protocol, we have also specified a
     few additional properties.
For instance, one could (wrongly) expect that the initially trusted
     block should still be within the trusting period, when the light client
     terminates.
We formulate this property as a state invariant below:  

\begin{lstlisting}[language=tla]
PositiveBeforeTrustedHeaderExpires $\tladef$
 LET trustedTime $\tladef$ blockchain[TRUSTED_HEIGHT].time IN
 state = "finishedSuccess"
  $\Rightarrow$ trustedTime $\ge$ now + TRUSTING_PERIOD
\end{lstlisting}

This invariant is violated, as the light client trusts the block at
    \textsf{TARGET\_HEIGHT} upon successful termination, whereas the block at
    \textsf{TRUSTED\_HEIGHT} may be not trusted anymore.
The model checker produces a counterexample in one step.

Another false hypothesis is formulated in the invariant candidate
    \textsf{StoredHeadersAreVerifiedOrNotTrusted}. 
It is a weaker version of \textsf{SuccessOnCorrPrimaryAndChainOfTrust}, as it
    is inspecting all blocks, not only the verified ones. 
Although this property holds for $\delta=3$, it fails for $\delta=5$. 
The model checker is showing us a counterexample, where the lightblock~5 is
    verified on the basis of block~3, while block~4 is kept unverified.

\begin{lstlisting}[language=tla]
StoredHeadersAreVerifiedOrNotTrusted $\tladef$
 state = "finishedSuccess"
  $\Rightarrow$ $\forall$ lh, rh $\in$ DOMAIN fetchedLightBlocks:
   $\lor$ lh $\ge$ rh
   $\lor$ $\exists$ mh $\in$ DOMAIN fetchedLightBlocks: lh < mh $\land$ mh < rh
   $\lor$ LET l = fetchedLightBlocks[lh]
         r = fetchedLightBlocks[rh] IN
     $\lor$ "OK" = ValidAndVerified(l, r)
     $\lor$ now - l.header.time > TRUSTING_PERIOD
\end{lstlisting}

Table~\ref{tab:mc} summarizes the results of our experiments
    with~\textsc{Apalache}. 
The experiments were run in an AWS instance equipped with 32GB RAM and a 4-core
    Intel\textsuperscript{\textregistered}
    Xeon\textsuperscript{\textregistered} CPU E5-2686 v4 @ 2.30GHz CPU. 
We write ``\bmark{k}'' when a bug is reported at depth~$k$,
    and ``\bcheck{k}'' when no bug is reported up to depth~$k$.
We ran the experiments for small parameter values such as 4 to 7 nodes and 3 to
    5 blocks.
For these parameters, the tool responds in a matter of minutes, which is
    fast enough for us to see interesting counterexamples. 
For larger parameter values, the model checking gets significantly
    slower. 
We believe that this is caused by powersets and cardinality tests.

We also tried to run the model checker~\textsc{TLC}. 
However, we ran into two problems. 
First, \textsc{TLC} enumerates states, so it requires timestamps and the global clock 
    to range over a finite domain.
Second, even when we introduced a logical abstraction of time,
    \textsc{TLC} could not inspect all initial states, as it had to
    enumerate all combinations of multiple powersets.
Although we believe that it would be possible to use this model checker by
    introducing a more abstract version of the blockchain, we found
    that~\textsc{Apalache} was sufficient for our purposes.

In conclusion, model checking has improved our understanding of
    the protocol. 
It also has confirmed our intuition by showing us counterexamples. 
In the future, we plan to construct an inductive invariant, to obtain
    a complete argument over block heights.

\section{Implementation}
\label{sec:impl}

We implemented the
light client verification protocol in the Rust programming language.

\paragraph{Architecture}

The light client is architected for composability:
It was expected that a light client running the verification
     algorithm would share the process space with other components
     running in separate threads.
These separate components require a synchronous interface to fetch
     headers  the light client verified.

To maintain simplicity of the core verification logic, the
     light client is implemented as a finite state machine operating
     on events fetched from an unbounded queue.
Events processed by the light client can then be understood as atomic
     transformations, performed on state owned and encapsulated within
     the light client.
Events sent to the light client can include a callback to facilitate
     synchronizing interactions between components.
The queues in this case have the benefit of serializing all access to
     the light client core logic, eliminating the need for mutexes while
     guaranteeing memory safety.

Interactions with the light client are performed via a facade which
     acts as a thin interface exposing synchronous methods which
     serialize interactions with the light client runtime via the
     queue.
The method set for this interface is abstracted in such a way as to
     allow mock replacements to be used during testing.
This abstraction allows testing complex interactions between
     components.

\paragraph{Rust implementation}

Each aspect of the protocol is specified at the code level as an
     interface (called \texttt{trait} in Rust), hereafter called a
     \emph{component}.
Each component may (but need not) depend on  other components.
This allows us to unit-test each component independently by mocking
     out the others components it depends on.
Moreover, this approach also enables us to implement deterministic and
     reproducible tests by mocking out components which perform
     intrinsically non-deterministic computations, such as performing
     network requests or fetching the current system time.

Figure~\ref{fig:rust-verifier-trait} shows the definition of the
Verifier trait, which consists of a single method taking in a untrusted block,
a trusted one, and a set of options including the $\trustp$ and the current time.
The definition of a concrete verifier which depends on other components
is provided in Figure~\ref{fig:rust-verifier-def}.

The implementation spans around 2500 lines of code (not counting
     comments and whitespace), and is openly available~\cite{impl}.

\begin{figure}[t]
\begin{minipage}[t]{0.4\textwidth}
\begin{lstlisting}[language=Rust]
pub trait Verifier {
    fn verify(
        &self,
        untrusted: &LightBlock,
        trusted: &LightBlock,
        options: &Options
    ) -> Verdict;
}
\end{lstlisting}
\end{minipage}
\caption{Definition of the verifier trait}
\label{fig:rust-verifier-trait}
\end{figure}

\begin{figure}[t]
\begin{minipage}[t]{0.4\textwidth}
\begin{lstlisting}[language=Rust]
pub struct ProdVerifier {
    predicates: Box<dyn VerificationPredicates>,
    voting_power_calculator: Box<dyn VotingPowerCalculator>,
    commit_validator: Box<dyn CommitValidator>,
    header_hasher: Box<dyn HeaderHasher>,
}
\end{lstlisting}
\end{minipage}
\caption{Definition of a concrete verifier}
\label{fig:rust-verifier-def}
\end{figure}

\section{Related Work}

Bitcoin introduced the notion of a light client protocol in the form 
    of \emph{simplified payment verification (SPV)}~\cite{nakamoto2019bitcoin}.
    In SPV, a client downloads complete chains of block headers in order to discover
    the longest chain, or more accurately, the chain with the greatest amount of computational work,
    which is deemed to be the canonical one. From there, it can verify proofs of transaction 
    inclusion in any of the blocks. Notably, the protocol is linear in the
    number of blocks, which may be prohibitive for long chains.
    Sublinear variants of SPV have been proposed, including so-called Proofs-of-Proofs-of-Work
    ~\cite{kiayias2016proofs, kiayias2017non} and
    Flyclient~\cite{bunz2019flyclient}, which utilize probabilistic sampling to reduce
    the number of headers a light client must download. These solutions apply strictly to consensus protocols
    where agreement is determined by a heaviest-chain scoring metric, and are
    thus not relevant to BFT protocols like Tendermint,
    where chains are extended one block at a time by a quorum of validators.

Tendermint was the first system to lift traditional BFT consensus
     protocols~\cite{CastroL02,DLS88,Lamport11a,BessaniSA14} into the
     blockchain domain~\cite{kwon2014tendermint}.
In traditional BFT consensus, clients submit requests directly to
     validator nodes (known as \emph{replicas}), and wait to receive a
     quorum of identical responses ~\cite{CastroL02, BessaniSA14}.
That is, these systems expect clients to know the network addresses
     of validators and to maintain direct connections with them.
And while some do not even support validator set changes (i.e.,
\emph{reconfiguration})~\cite{CastroL02}, those that do expect clients
     to learn about the latest validator set from some unspecified
     directory service~\cite{BessaniSA14}.
In a public, open-membership, adversarial setting with arbitrary
     validator set changes and no trusted directory service, such an
     approach to servicing clients is wholly insufficient.

Most comparable blockchain systems avoid this problem by restricting
     validator set changes to happen in ``epochs'', so that the set of
     validators is static for a period of time (an epoch) and can only
     change at the epoch boundary.
In this setting, a lightclient could always skip from the first to
     last height in the epoch, and only needs to verify validator set
     changes at the epoch boundaries.
Tendermint, however, does not restrict the changes of the validator
     sets; they can happen at every block.
     
Tendermint emerged in the context of Proof-of-Stake blockchains, where
     economic stake within the system, rather than resource
     consumption outside the system, is used to incentivize correct
     behaviour.
Proof-of-Stake systems have long been known to suffer from the
     so-called \emph{nothing-at-stake} attack, whereby past validators
     who have since exited their stake can forge arbitrary alternative
     histories~\cite{subjectivity}.
Such attacks are solved by \emph{subjective
     initialization}, whereby a client subjectively decides which
     validators to initially trust, and by an \emph{unbonding period},
     during which validators can be punished for misbehaviour, and
     beyond which they can no longer be trusted by clients.
While much has been written about this informally
     online~\cite{subjectivity, pos_clients, casper_tendermint,
     eth_pos_faq}, we are not aware of a formal treatment of  the
     Proof-of-Stake light client problem.

From the viewpoint of more classic research literature, the light
     client  problem is a modern variant of performing a
     read operation from a replicated database~\cite{BernsteinHG87},
     or reading
     a shared state~\cite{AttiyaBD95}, when some of the peers are
     faulty, or learning an accepted value in
     Paxos~\cite{generalizedpaxos}.
As this is an important problem, there is a vast literature on this
     subject, also with respect to diverse consistency criteria,
     e.g.,~\cite{GotsmanB17}.
As Tendermint blockchains provide ``immediate
     finality''\dash---a.k.a., irrevocability in more classic
     consensus definitions~\cite{Charron-BostS09}\dash---, for the
     light client we are interested in strong consistency.
     
The first contribution of this paper is a formalization of the
     Tendermint Security model.
It shares the aspect of Authenticated Byzantine Faults with in the
     classic work in~\cite{DLS88,CastroL02}.
However, the staking mechanism requires us to formalize a notion close
     to Byzantine faults with recovery which is less
     studied~\cite{CastroL02,BarakHHN00,ADFHW07:OPODIS}. A similar concept has
     also be considered for communication faults~\cite{BielyWCGHS07}.

Other approaches to achieve a sublinear traversal of blockchains include the use of
    alternative authenticated data structures or more advanced cryptography.
    For instance, the Merkle Mountain Ranges \cite{CrosbyW09,mmrs} used in 
    Flyclient \cite{bunz2019flyclient} can be used in a BFT-based blockchain for
    logarithmically verifying that a past block is an ancestor of a more recent trusted block.
    However, as noted, they cannot be used for verifying that a future block is a child
    of a past trusted block without a mechanism like that described in this paper.
    Skipchains \cite{nikitin2017chainiac} do allow clients
    to skip from past to future blocks, though they require the retroactive addition 
    of (aggregated or collective) signatures to past blocks. Since past blocks cannot 
    be directly modified, such protocols should be considered as services layered on top of 
    the underlying blockchain protocol. Finally, recent advances in cryptography
    ~\cite{groth2016size, ben2017scalable} enable blockchain designs where clients 
	verify succint proofs attesting to some set
	of state transitions. This can take the form of proofs that the validator set changed in
	a particular way~\cite{gabizon2020plumo}, or, in a more extreme case, 
	proofs attesting to the correct 
	execution of the entire blockchain protocol, which eliminate the need to traverse the 
	chain at all~\cite{meckler2018coda}. While exciting,
	such protocols tend to require more exotic assumptions beyond the standard 
	authenticated Byzantine fault model and are thus less proven in real-world systems.

While we have focused here on a sublinear light client protocol for
     verifying a Tendermint blockchain under the Security Model we
     outlined, we have not addressed  what guarantees remain in the
     event the security model fails, i.e., when 1/3 or more of the
     voting power is faulty.
While such a scenario may cause our light client to accept a faulty
     chain (i.e., a fork), such faults may be detected, so long as the
     client is connected to at least one honest full node\dash---a standard
     assumption among light client protocols.
Furthermore, in future work, we intend to show that validators are
     accountable, that is, detection of forks will result in the
     faulty validators being identified, and thus punished
     accordingly.
While such protocols may detect forks in the blockchain, other
     protocols have focused on detecting invalid state transitions,
     where validators commit to a state that cannot be derived from
     applying transactions in the blockchain to the previous
     state~\cite{albassam2018fraud}.
Such protocols are complementary to ours; they use so-called \emph{fraud
     proofs} and \emph{data-availability proofs} to allow  light clients to
     detect invalid state transitions, even when a majority of
     validators are faulty.

\newcommand{\manycites}{\cite{BerkovitsLLPS19,DHVWZ14,KLVW17:POPL,GGB19,KraglQH18,BouajjaniEJQ18,DDMW19:CAV}}

We have used \tlap{}~\cite{lamport2002specifying} for  specification as
     it became a Lingua Franca for formal specification of complex
     distributed systems~\cite{NewcombeRZMBD15}.
The APALACHE model checker~\cite{KKT19} proved very
     effective to model check the protocol under different fault
     scenarios.
Recently there has been made significant progress in automated
     verification of fault-tolerant distributed algorithms~\manycites.
While this work typically focuses on consensus and Paxos-like
     algorithms, our work considers how to observe the state of the
     result of consensus from the outside.
In systems that solve consensus, the notions of quorums or thresholds
     are crucial and at the core of verification
     approaches~\cite{DHVWZ14,KLVW17:POPL,BerkovitsLLPS19}.
     In our system, these quorums appear in limited form, namely as data
     in the validator
     sets and commits.
For now, our current model checking results just consider small
     systems (up to seven validators), but we are confident to be able
     to adapt the recent results in automated verification to our
     domain in order to be able to scale to realistic sizes, or even
     to the parameterized case~\cite{AK86,EN95,2015Bloem}, that is, for all
     numbers of validators.

Recently, Ognjanovic~\cite{Ogn2020} used our \tlap{} model as a base
     for implementing the light client verification protocol in Scala
     and verifying it with Stainless
     (\url{https://stainless.epfl.ch}).

\section{Conclusions}
\label{sec:conc}

\paragraph{Security.}
We have presented the first formalization of the Tendermint security
     model, which allows us to understand it as an Authenticated
     Byzantine model with dynamic Byzantine faults.
We presented a light client verification protocol based on that model,
     and proved that it is always safe, and that it satisfies liveness
     if it communicates with a correct full node.
It is clear that in principle faulty full nodes would benefit from
     lying to the light client, by trying to make the light client
     accept a block that deviates (e.g., contains additional
     transactions) from the one generated by Tendermint consensus.
However, our safety properties guarantees that this cannot happen if
     the security model holds.

However, the question remains whether for liveness, full nodes would
     benefit from cooperating, i.e., from responding timely.
This is indeed the case if we consider the broader context where the
     classic less than $1/3$ model may be violated.
In this case,  the light client may help the correct full nodes to
     understand whether their header is a good one, or in other words,
     to detect forks on the chain.
In parallel to the verification logic described in this paper, we also
     design a fork detector that probes multiple full nodes.
In combination with the detector, the correct full nodes indeed have
     the incentive  to respond, and we can base our liveness arguments
     on the assumption that correct full nodes reliably respond.
The details of the fork detector is outside the scope of this paper.

\paragraph{Performance.}
It is obvious that in the case where validator set changes are rare
     (which is the case in the Cosmos Hub, the largest live network in the Cosmos ecosystem), skipping
     verification outperforms sequential verification: if the
     validator set does not change, then verifying a block on height
     1000 based on height 100 needs one step with skipping
     verification and 900 steps with sequential verification; each
     step involving expensive operations as checking hashes and
     signatures.
Still, there are several interesting performance meassurement we are
     interested in so that we are currently setting up a framework for
     experimental performance evaluation.

\bibliographystyle{alpha}
\bibliography{lit}

 \clearpage
 \appendix
 \onecolumn

 \section{APPENDIX: Complete~\tlap{} specifications}\label{sec:tla-complete}

\newboolean{shading} 
\setboolean{shading}{false}
\makeatletter
\newlength{\symlength}
\renewcommand{\implies}{\Rightarrow}
\newcommand{\ltcolon}{\mathrel{<\!\!\mbox{:}}}
\newcommand{\colongt}{\mathrel{\!\mbox{:}\!\!>}}
\newcommand{\defeq}{\;\mathrel{\smash   
    {{\stackrel{\scriptscriptstyle\Delta}{=}}}}\;}
\newcommand{\dotdot}{\mathrel{\ldotp\ldotp}}
\newcommand{\coloncolon}{\mathrel{::\;}}
\newcommand{\eqdash}{\mathrel = \joinrel \hspace{-.28em}|}
\newcommand{\pp}{\mathbin{++}}
\newcommand{\mm}{\mathbin{--}}
\newcommand{\stst}{*\!*}
\newcommand{\slsl}{/\!/}
\newcommand{\ct}{\hat{\hspace{.4em}}}
\renewcommand{\A}{\forall}
\renewcommand{\E}{\exists}
\renewcommand{\AA}{\makebox{$\raisebox{.05em}{\makebox[0pt][l]{
   $\forall\hspace{-.517em}\forall\hspace{-.517em}\forall$}}
   \forall\hspace{-.517em}\forall \hspace{-.517em}\forall\,$}}
\newcommand{\EE}{\makebox{$\raisebox{.05em}{\makebox[0pt][l]{
   $\exists\hspace{-.517em}\exists\hspace{-.517em}\exists$}}
   \exists\hspace{-.517em}\exists\hspace{-.517em}\exists\,$}}
\newcommand{\whileop}{\.{\stackrel
  {\mbox{\raisebox{-.3em}[0pt][0pt]{$\scriptscriptstyle+\;\,$}}}
  {-\hspace{-.16em}\triangleright}}}

\newcommand{\ASSUME}{\textsc{assume }}
\newcommand{\ASSUMPTION}{\textsc{assumption }}
\newcommand{\AXIOM}{\textsc{axiom }}
\renewcommand{\BOOLEAN}{\textsc{boolean }}
\newcommand{\CASE}{\textsc{case }}
\newcommand{\CONSTANT}{\textsc{constant }}
\newcommand{\CONSTANTS}{\textsc{constants }}
\newcommand{\ELSE}{\settowidth{\symlength}{\THEN}
   \makebox[\symlength][l]{\textsc{ else}}}
\newcommand{\EXCEPT}{\textsc{ except }}
\newcommand{\EXTENDS}{\textsc{extends }}
\renewcommand{\FALSE}{\textsc{false}}
\newcommand{\IF}{\textsc{if }}
\newcommand{\IN}{\settowidth{\symlength}{\LET}
   \makebox[\symlength][l]{\textsc{in}}}
\newcommand{\INSTANCE}{\textsc{instance }}
\newcommand{\LET}{\textsc{let }}
\newcommand{\LOCAL}{\textsc{local }}
\newcommand{\MODULE}{\textsc{module }}
\newcommand{\OTHER}{\textsc{other }}
\newcommand{\STRING}{\textsc{string}}
\newcommand{\THEN}{\textsc{ then }}
\newcommand{\THEOREM}{\textsc{theorem }}
\newcommand{\LEMMA}{\textsc{lemma }}
\newcommand{\PROPOSITION}{\textsc{proposition }}
\newcommand{\COROLLARY}{\textsc{corollary }}
\renewcommand{\TRUE}{\textsc{true}}
\newcommand{\VARIABLE}{\textsc{variable }}
\newcommand{\VARIABLES}{\textsc{variables }}
\newcommand{\WITH}{\textsc{ with }}
\newcommand{\WF}{\textrm{WF}}
\newcommand{\SF}{\textrm{SF}}
\newcommand{\CHOOSE}{\textsc{choose }}
\newcommand{\ENABLED}{\textsc{enabled }}
\newcommand{\UNCHANGED}{\textsc{unchanged }}
\renewcommand{\SUBSET}{\textsc{subset }}
\renewcommand{\UNION}{\textsc{union }}
\renewcommand{\DOMAIN}{\textsc{domain }}
\newcommand{\BY}{\textsc{by }}
\newcommand{\OBVIOUS}{\textsc{obvious }}
\newcommand{\HAVE}{\textsc{have }}
\newcommand{\QED}{\textsc{qed }}
\newcommand{\TAKE}{\textsc{take }}
\newcommand{\DEF}{\textsc{ def }}
\newcommand{\HIDE}{\textsc{hide }}
\newcommand{\RECURSIVE}{\textsc{recursive }}
\newcommand{\USE}{\textsc{use }}
\newcommand{\DEFINE}{\textsc{define }}
\newcommand{\PROOF}{\textsc{proof }}
\newcommand{\WITNESS}{\textsc{witness }}
\newcommand{\PICK}{\textsc{pick }}
\newcommand{\DEFS}{\textsc{defs }}
\newcommand{\PROVE}{\settowidth{\symlength}{\ASSUME}
   \makebox[\symlength][l]{\textsc{prove}}\@s{-4.1}}
\newcommand{\SUFFICES}{\textsc{suffices }}
\newcommand{\NEW}{\textsc{new }}
\newcommand{\LAMBDA}{\textsc{lambda }}
\newcommand{\STATE}{\textsc{state }}
\newcommand{\ACTION}{\textsc{action }}
\newcommand{\TEMPORAL}{\textsc{temporal }}
\newcommand{\ONLY}{\textsc{only }}              
\newcommand{\OMITTED}{\textsc{omitted }}        
\newcommand{\@pfstepnum}[2]{\ensuremath{\langle#1\rangle}\textrm{#2}}
\newcommand{\bang}{\@s{1}\mbox{\small !}\@s{1}}
\newcommand{\p@barbar}{\ifpcalsymbols
   \,\,\rule[-.25em]{.075em}{1em}\hspace*{.2em}\rule[-.25em]{.075em}{1em}\,\,
   \else \,||\,\fi}
\newcommand{\p@fair}{\textbf{fair }}
\newcommand{\p@semicolon}{\textbf{\,; }}
\newcommand{\p@algorithm}{\textbf{algorithm }}
\newcommand{\p@mmfair}{\textbf{-{}-fair }}
\newcommand{\p@mmalgorithm}{\textbf{-{}-algorithm }}
\newcommand{\p@assert}{\textbf{assert }}
\newcommand{\p@await}{\textbf{await }}
\newcommand{\p@begin}{\textbf{begin }}
\newcommand{\p@end}{\textbf{end }}
\newcommand{\p@call}{\textbf{call }}
\newcommand{\p@define}{\textbf{define }}
\newcommand{\p@do}{\textbf{ do }}
\newcommand{\p@either}{\textbf{either }}
\newcommand{\p@or}{\textbf{or }}
\newcommand{\p@goto}{\textbf{goto }}
\newcommand{\p@if}{\textbf{if }}
\newcommand{\p@then}{\,\,\textbf{then }}
\newcommand{\p@else}{\ifcsyntax \textbf{else } \else \,\,\textbf{else }\fi}
\newcommand{\p@elsif}{\,\,\textbf{elsif }}
\newcommand{\p@macro}{\textbf{macro }}
\newcommand{\p@print}{\textbf{print }}
\newcommand{\p@procedure}{\textbf{procedure }}
\newcommand{\p@process}{\textbf{process }}
\newcommand{\p@return}{\textbf{return}}
\newcommand{\p@skip}{\textbf{skip}}
\newcommand{\p@variable}{\textbf{variable }}
\newcommand{\p@variables}{\textbf{variables }}
\newcommand{\p@while}{\textbf{while }}
\newcommand{\p@when}{\textbf{when }}
\newcommand{\p@with}{\textbf{with }}
\newcommand{\p@lparen}{\textbf{(\,\,}}
\newcommand{\p@rparen}{\textbf{\,\,) }}   
\newcommand{\p@lbrace}{\textbf{\{\,\,}}   
\newcommand{\p@rbrace}{\textbf{\,\,\} }}

\renewcommand{\_}{\rule{.4em}{.06em}\hspace{.05em}}
\newlength{\equalswidth}
\let\oldin=\in
\let\oldnotin=\notin
\renewcommand{\in}{
   {\settowidth{\equalswidth}{$\.{=}$}\makebox[\equalswidth][c]{$\oldin$}}}
\renewcommand{\notin}{
   {\settowidth{\equalswidth}{$\.{=}$}\makebox[\equalswidth]{$\oldnotin$}}}

\newlength{\charwidth}\settowidth{\charwidth}{{\small\tt M}}
\newlength{\boxrulewd}\setlength{\boxrulewd}{.4pt}
\newlength{\boxlineht}\setlength{\boxlineht}{.5\baselineskip}
\newcommand{\boxsep}{\charwidth}
\newlength{\boxruleht}\setlength{\boxruleht}{.5ex}
\newlength{\boxruledp}\setlength{\boxruledp}{-\boxruleht}
\addtolength{\boxruledp}{\boxrulewd}
\newcommand{\boxrule}{\leaders\hrule height \boxruleht depth \boxruledp
                      \hfill\mbox{}}
\newcommand{\@computerule}{
  \setlength{\boxruleht}{.5ex}
  \setlength{\boxruledp}{-\boxruleht}
  \addtolength{\boxruledp}{\boxrulewd}}

\newcommand{\bottombar}{\hspace{-\boxsep}
  \raisebox{-\boxrulewd}[0pt][0pt]{\rule[.5ex]{\boxrulewd}{\boxlineht}}
  \boxrule
  \raisebox{-\boxrulewd}[0pt][0pt]{
      \rule[.5ex]{\boxrulewd}{\boxlineht}}\hspace{-\boxsep}\vspace{0pt}}

\newcommand{\moduleLeftDash}
   {\hspace*{-\boxsep}
     \raisebox{-\boxlineht}[0pt][0pt]{\rule[.5ex]{\boxrulewd
               }{\boxlineht}}
    \boxrule\hspace*{.4em }}

\newcommand{\moduleRightDash}
    {\hspace*{.4em}\boxrule
    \raisebox{-\boxlineht}[0pt][0pt]{\rule[.5ex]{\boxrulewd
               }{\boxlineht}}\hspace{-\boxsep}}

\newcommand{\midbar}{\hspace{-\boxsep}\raisebox{-.5\boxlineht}[0pt][0pt]{
   \rule[.5ex]{\boxrulewd}{\boxlineht}}\boxrule\raisebox{-.5\boxlineht
   }[0pt][0pt]{\rule[.5ex]{\boxrulewd}{\boxlineht}}\hspace{-\boxsep}}

\newif\ifpcalshading \pcalshadingfalse
\newif\ifpcalsymbols \pcalsymbolsfalse
\newif\ifcsyntax     \csyntaxtrue

\newlength{\pcalvspace}\setlength{\pcalvspace}{0pt}
\newcommand{\@pvspace}[1]{
  \ifpcalshading
     \par\global\setlength{\pcalvspace}{#1}
  \else
     \par\vspace{#1}
  \fi
}

\newlength{\lcomindent}
\setlength{\lcomindent}{0pt}

\newcommand\tstrut
  {\raisebox{\vshadelen}{\raisebox{-.25em}{\rule{0pt}{1.15em}}}
   \global\setlength{\vshadelen}{0pt}}
\newcommand\rstrut{\raisebox{-.25em}{\rule{0pt}{1.15em}}
 \global\setlength{\vshadelen}{0pt}}

\renewcommand{\.}[1]{\ensuremath{\mbox{}#1\mbox{}}}

\newcommand{\@s}[1]{\hspace{#1pt}}           

\newlength{\@xlen}
\newcommand\xtstrut
  {\setlength{\@xlen}{1.05em}
   \addtolength{\@xlen}{\pcalvspace}
    \raisebox{\vshadelen}{\raisebox{-.25em}{\rule{0pt}{\@xlen}}}
   \global\setlength{\vshadelen}{0pt}
   \global\setlength{\pcalvspace}{0pt}}

\newcommand{\@x}[1]{\par
  \ifpcalshading
  \makebox[0pt][l]{\shadebox{\xtstrut\hspace*{\textwidth}}}
  \fi
  \mbox{$\mbox{}#1\mbox{}$}}  

\newcommand{\@xx}[1]{\mbox{$\mbox{}#1\mbox{}$}}  

\newcommand{\@y}[1]{\mbox{\footnotesize\hspace{.65em}
  \ifthenelse{\boolean{shading}}{
      \shadebox{#1\hspace{-\the\lastskip}\rstrut}}
               {#1\hspace{-\the\lastskip}\rstrut}}}

\newcommand{\@z}[1]{\makebox[0pt][l]{\footnotesize
  \ifthenelse{\boolean{shading}}{
      \shadebox{#1\hspace{-\the\lastskip}\rstrut}}
               {#1\hspace{-\the\lastskip}\rstrut}}}

\newcommand{\@w}[1]{\textsf{``{#1}''}}

\def\graymargin{1}

\newlength{\templena}
\newlength{\templenb}
\newsavebox{\tempboxa}
\newcommand{\shadebox}[1]{{\setlength{\fboxsep}{\graymargin pt}
     \savebox{\tempboxa}{#1}
     \settoheight{\templena}{\usebox{\tempboxa}}
     \settodepth{\templenb}{\usebox{\tempboxa}}
     \hspace*{-\fboxsep}\raisebox{0pt}[\templena][\templenb]
        {\colorbox{boxshade}{\usebox{\tempboxa}}}\hspace*{-\fboxsep}}}

\newlength{\vshadelen}
\setlength{\vshadelen}{0pt}
\newcommand{\vshade}[1]{\ifthenelse{\boolean{shading}}
   {\global\setlength{\vshadelen}{#1pt}}
   {\vspace{#1pt}}}

\newlength{\boxwidth}
\newlength{\multicommentdepth}

\newcounter{pardepth}
\setcounter{pardepth}{0}

\newcommand{\setgmargin}[1]{
  \expandafter\xdef\csname gmargin\roman{pardepth}\endcsname{#1}}
\newcommand{\thegmargin}{\csname gmargin\roman{pardepth}\endcsname}
\newcommand{\gmargin}{0pt}

\newsavebox{\tempsbox}

\newlength{\@cparht}
\newlength{\@cpardp}
\newenvironment{cpar}[6]{
  \addtocounter{pardepth}{-#1}
  \ifthenelse{\boolean{shading}}{\par\begin{lrbox}{\tempsbox}
                                 \begin{minipage}[t]{\linewidth}}{}
  \begin{list}{}{
     \edef\temp{\thegmargin}
     \ifthenelse{\equal{#3}{T}}
       {\settowidth{\leftmargin}{\hspace{\temp}\footnotesize #6\hspace{#5pt}}
        \addtolength{\leftmargin}{#4pt}}
       {\setlength{\leftmargin}{#4pt}
        \addtolength{\leftmargin}{#5pt}
        \addtolength{\leftmargin}{\temp}
        \setlength{\itemindent}{-#5pt}}
      \ifthenelse{\equal{#2}{T}}{\addtocounter{pardepth}{1}
                                 \setgmargin{\the\leftmargin}}{}
      \setlength{\labelwidth}{0pt}
      \setlength{\labelsep}{0pt}
      \setlength{\itemindent}{-\leftmargin}
      \setlength{\topsep}{0pt}
      \setlength{\parsep}{0pt}
      \setlength{\partopsep}{0pt}
      \setlength{\parskip}{0pt}
      \setlength{\itemsep}{0pt}
      \setlength{\itemindent}{#4pt}
      \addtolength{\itemindent}{-\leftmargin}}
   \ifthenelse{\equal{#3}{T}}
      {\item[\tstrut\footnotesize \hspace{\temp}{#6}\hspace{#5pt}]
        }
      {\item[\tstrut\hspace{\temp}]
         }
   \footnotesize}
 {\hspace{-\the\lastskip}\tstrut
 \end{list}
  \ifthenelse{\boolean{shading}}
          {\end{minipage}
           \end{lrbox}
           \ifpcalshading
             \setlength{\@cparht}{\ht\tempsbox}
             \setlength{\@cpardp}{\dp\tempsbox}
             \addtolength{\@cparht}{.15em}
             \addtolength{\@cpardp}{.2em}
             \addtolength{\@cparht}{\@cpardp}
            \addtolength{\@cparht}{\pcalvspace}
             \ifdim \pcalvspace > .8em
               \addtolength{\pcalvspace}{-.2em}
               \hspace*{-\lcomindent}
               \shadebox{\rule{0pt}{\pcalvspace}\hspace*{\textwidth}}\par
               \global\setlength{\pcalvspace}{0pt}
               \fi
             \hspace*{-\lcomindent}
             \makebox[0pt][l]{\raisebox{-\@cpardp}[0pt][0pt]{
                 \shadebox{\rule{0pt}{\@cparht}\hspace*{\textwidth}}}}
             \hspace*{\lcomindent}\usebox{\tempsbox}
             \par
           \else
             \shadebox{\usebox{\tempsbox}}\par
           \fi}
           {}
  }

 \newcommand{\xtest}[1]{\par
 \makebox[0pt][l]{\shadebox{\xtstrut\hspace*{\textwidth}}}
 \mbox{$\mbox{}#1\mbox{}$}}

\newenvironment{lcom}[1]{
  \setlength{\lcomindent}{#1pt} 
  \par\vspace{.2em}
  \sloppypar
  \setcounter{pardepth}{0}
  \footnotesize
  \begin{list}{}{
    \setlength{\leftmargin}{#1pt}
    \setlength{\labelwidth}{0pt}
    \setlength{\labelsep}{0pt}
    \setlength{\itemindent}{0pt}
    \setlength{\topsep}{0pt}
    \setlength{\parsep}{0pt}
    \setlength{\partopsep}{0pt}
    \setlength{\parskip}{0pt}}
    \item[]}
  {\end{list}\vspace{.3em}\setlength{\lcomindent}{0pt}
 }

\newlength{\xmcomlen}
\newenvironment{mcom}[1]{
  \setcounter{pardepth}{0}
  \hspace{.65em}
  \begin{lrbox}{\alignbox}\sloppypar
      \setboolean{shading}{false}
      \setlength{\boxwidth}{#1pt}
      \addtolength{\boxwidth}{-.65em}
      \begin{minipage}[t]{\boxwidth}\footnotesize
      \parskip=0pt\relax}
       {\end{minipage}\end{lrbox}
       \setlength{\xmcomlen}{\textwidth}
       \addtolength{\xmcomlen}{-\wd\alignbox}
       \settodepth{\alignwidth}{\usebox{\alignbox}}
       \global\setlength{\multicommentdepth}{\alignwidth}
       \setlength{\boxwidth}{\alignwidth}
       \global\addtolength{\alignwidth}{-\maxdepth}
       \addtolength{\boxwidth}{.1em}
      \raisebox{0pt}[0pt][0pt]{
        \ifthenelse{\boolean{shading}}
          {\ifpcalshading
             \hspace*{-\xmcomlen}
             \shadebox{\rule[-\boxwidth]{0pt}{0pt}\hspace*{\xmcomlen}
                          \usebox{\alignbox}}
           \else
             \shadebox{\usebox{\alignbox}}
           \fi
          }
          {\usebox{\alignbox}}}
       \vspace*{\alignwidth}\pagebreak[0]\vspace{-\alignwidth}\par}

\newcommand{\multivspace}[1]{\addtolength{\multicommentdepth}{-#1\baselineskip}
 \addtolength{\multicommentdepth}{1.2em}
 \ifthenelse{\lengthtest{\multicommentdepth > 0pt}}
    {\par\vspace{\multicommentdepth}\par}{}}

\newcommand{\raisedDash}[3]{\raisebox{#2ex}{\setlength{\alignwidth}{.5em}
  \rule{#1\alignwidth}{#3em}}}

\newcommand{\cdash}[1]{\raisedDash{#1}{.5}{.04}}
\newcommand{\usdash}[1]{\raisedDash{#1}{0}{.04}}
\newcommand{\ceqdash}[1]{\raisedDash{#1}{.5}{.08}}
\newcommand{\tdash}[1]{\raisedDash{#1}{1}{.08}}

\newlength{\spacewidth}
\setlength{\spacewidth}{.2em}
\newcommand{\e}[1]{\hspace{#1\spacewidth}}

\newlength{\alignboxwidth}
\newlength{\alignwidth}
\newsavebox{\alignbox}

\newcommand{\al}[3]{
  \typeout{\%{#1}{#2}{\the\alignwidth}}
  \cl{#3}}

\newcommand{\cl}[1]{
  \savebox{\alignbox}{\mbox{$\mbox{}#1\mbox{}$}}
  \settowidth{\alignboxwidth}{\usebox{\alignbox}}
  \addtolength{\alignwidth}{\alignboxwidth}
  \usebox{\alignbox}}

\newcommand{\fl}[1]{
  \par
  \savebox{\alignbox}{\mbox{$\mbox{}#1\mbox{}$}}
  \settowidth{\alignwidth}{\usebox{\alignbox}}
  \usebox{\alignbox}}

\makeatletter
\newcommand{\tlx@c}{\c@tlx@ctr\advance\c@tlx@ctr\@ne}
\newcounter{tlx@ctr}
\c@tlx@ctr=\itfam \multiply\c@tlx@ctr"100\relax \advance\c@tlx@ctr "7061\relax
\mathcode`a=\tlx@c \mathcode`b=\tlx@c \mathcode`c=\tlx@c \mathcode`d=\tlx@c
\mathcode`e=\tlx@c \mathcode`f=\tlx@c \mathcode`g=\tlx@c \mathcode`h=\tlx@c
\mathcode`i=\tlx@c \mathcode`j=\tlx@c \mathcode`k=\tlx@c \mathcode`l=\tlx@c
\mathcode`m=\tlx@c \mathcode`n=\tlx@c \mathcode`o=\tlx@c \mathcode`p=\tlx@c
\mathcode`q=\tlx@c \mathcode`r=\tlx@c \mathcode`s=\tlx@c \mathcode`t=\tlx@c
\mathcode`u=\tlx@c \mathcode`v=\tlx@c \mathcode`w=\tlx@c \mathcode`x=\tlx@c
\mathcode`y=\tlx@c \mathcode`z=\tlx@c
\c@tlx@ctr=\itfam \multiply\c@tlx@ctr"100\relax \advance\c@tlx@ctr "7041\relax
\mathcode`A=\tlx@c \mathcode`B=\tlx@c \mathcode`C=\tlx@c \mathcode`D=\tlx@c
\mathcode`E=\tlx@c \mathcode`F=\tlx@c \mathcode`G=\tlx@c \mathcode`H=\tlx@c
\mathcode`I=\tlx@c \mathcode`J=\tlx@c \mathcode`K=\tlx@c \mathcode`L=\tlx@c
\mathcode`M=\tlx@c \mathcode`N=\tlx@c \mathcode`O=\tlx@c \mathcode`P=\tlx@c
\mathcode`Q=\tlx@c \mathcode`R=\tlx@c \mathcode`S=\tlx@c \mathcode`T=\tlx@c
\mathcode`U=\tlx@c \mathcode`V=\tlx@c \mathcode`W=\tlx@c \mathcode`X=\tlx@c
\mathcode`Y=\tlx@c \mathcode`Z=\tlx@c
\makeatother

\newenvironment{describe}[1]
   {\begin{list}{}{\settowidth{\labelwidth}{#1}
            \setlength{\labelsep}{.5em}
            \setlength{\leftmargin}{\labelwidth}
            \addtolength{\leftmargin}{\labelsep}
            \addtolength{\leftmargin}{\parindent}
            \def\makelabel##1{\rm ##1\hfill}}
            \setlength{\topsep}{0pt}}
   {\end{list}}

\makeatletter
\def\tla{\let\%\relax
         \@bsphack
         \typeout{\%{\the\linewidth}}
             \let\do\@makeother\dospecials\catcode`\^^M\active
             \let\verbatim@startline\relax
             \let\verbatim@addtoline\@gobble
             \let\verbatim@processline\relax
             \let\verbatim@finish\relax
             \verbatim@}
\let\endtla=\@esphack

\let\pcal=\tla
\let\endpcal=\endtla
\let\ppcal=\tla
\let\endppcal=\endtla

\newenvironment{tlatex}{\@computerule
                        \setlength{\parindent}{0pt}
                       \makeatletter\chardef\%=`\%}{}

\def\notla{\let\%\relax
         \@bsphack
             \let\do\@makeother\dospecials\catcode`\^^M\active
             \let\verbatim@startline\relax
             \let\verbatim@addtoline\@gobble
             \let\verbatim@processline\relax
             \let\verbatim@finish\relax
             \verbatim@}
\let\endnotla=\@esphack

\let\nopcal=\notla
\let\endnopcal=\endnotla
\let\noppcal=\notla
\let\endnoppcal=\endnotla

\tlatex
\setboolean{shading}{true}
\@x{}\moduleLeftDash\@xx{ {\MODULE} Blockchain\_A\_1}\moduleRightDash\@xx{}
\begin{lcom}{5.0}
\begin{cpar}{0}{F}{F}{0}{0}{}
 This is a high-level specification of \ensuremath{Tendermint}
 \ensuremath{blockchain
} that is designed specifically for the light client.
 Validators have the voting power of one. If you like to model various
 voting powers, introduce multiple copies of the same validator
 (do not forget to give them unique names though).
\end{cpar}
\end{lcom}
\@x{ {\EXTENDS} Integers ,\, FiniteSets}
\@pvspace{8.0pt}
\@x{ Min ( a ,\, b ) \.{\defeq} {\IF} a \.{<} b \.{\THEN} a \.{\ELSE} b}
\@pvspace{8.0pt}
\@x{ {\CONSTANT}}
\@x{\@s{8.2} AllNodes ,\,}
\@x{\@s{16.4}}
\@y{\@s{0}
 a set of all nodes that can act as validators (correct and faulty)
}
\@xx{}
\@x{\@s{8.2} ULTIMATE\_HEIGHT ,\,}
\@x{\@s{16.4}}
\@y{\@s{0}
 a maximal height that can be ever reached (modelling artifact)
}
\@xx{}
\@x{\@s{8.2} TRUSTING\_PERIOD}
\@x{\@s{16.4}}
\@y{\@s{0}
 the period within which the validators are trusted
}
\@xx{}
\@pvspace{8.0pt}
\@x{ Heights \.{\defeq} 1 \.{\dotdot} ULTIMATE\_HEIGHT\@s{8.2}}
\@y{\@s{0}
 possible heights
}
\@xx{}
\@pvspace{8.0pt}
\@x{}
\@y{\@s{0}
 A commit is just a set of nodes who have committed the block
}
\@xx{}
\@x{ Commits \.{\defeq} {\SUBSET} AllNodes}
\@pvspace{8.0pt}
\begin{lcom}{7.5}
\begin{cpar}{0}{F}{F}{0}{0}{}
The set of all block headers that can be on the \ensuremath{blockchain}.
 This is a simplified version of the Block data structure in the actual
 implementation.
\end{cpar}
\end{lcom}
\@x{ BlockHeaders \.{\defeq} [}
\@x{\@s{8.2} height \.{:} Heights ,\,}
\@x{\@s{16.4}}
\@y{\@s{0}
 the block height
}
\@xx{}
\@x{\@s{8.2} time \.{:} Int ,\,}
\@x{\@s{16.4}}
\@y{\@s{0}
 the block timestamp in some integer units
}
\@xx{}
\@x{\@s{8.2} lastCommit \.{:} Commits ,\,}
\@x{\@s{16.4}}
\@y{\@s{0}
 the nodes who have voted on the previous block, the set itself instead of a
 hash
}
\@xx{}
\begin{lcom}{12.5}
\begin{cpar}{0}{F}{F}{0}{0}{}
 in the implementation, only the hashes of \ensuremath{V} and
 \ensuremath{NextV} are stored in a block,
 as \ensuremath{V} and \ensuremath{NextV} are stored in the application state
\end{cpar}
\end{lcom}
\@x{\@s{8.2} VS \.{:} {\SUBSET} AllNodes ,\,}
\@x{\@s{22.45}}
\@y{\@s{0}
 the validators of this bloc. We store the validators instead of the hash.
}
\@xx{}
\@x{\@s{8.2} NextVS \.{:} {\SUBSET} AllNodes}
\@x{\@s{16.4}}
\@y{\@s{0}
 the validators of the next block. We store the next validators instead of
 the hash.
}
\@xx{}
\@x{ ]}
\@pvspace{8.0pt}
\@x{}
\@y{\@s{0}
 A signed header is just a header together with a set of commits
}
\@xx{}
 \@x{ LightBlocks \.{\defeq} [ header \.{:} BlockHeaders ,\, Commits \.{:}
 Commits ]}
\@pvspace{8.0pt}
\@x{ {\VARIABLES}}
\@x{\@s{16.4} now ,\,}
\@x{\@s{38.95}}
\@y{\@s{0}
 the current global time in integer units
}
\@xx{}
\@x{\@s{16.4} blockchain ,\,}
\@x{\@s{16.4}}
\@y{\@s{0}
 A sequence of \ensuremath{BlockHeaders}, which gives us a bird view of the
 \ensuremath{blockchain}.
}
\@xx{}
\@x{\@s{16.4} Faulty}
\begin{lcom}{17.5}
\begin{cpar}{0}{F}{F}{0}{0}{}
A set of faulty nodes, which can act as validators. We assume that the set
 of faulty processes is non-decreasing. If a process has recovered, it should
 connect using a different id.
\end{cpar}
\end{lcom}
\@x{}
\@y{\@s{0}
 all variables, to be used with \ensuremath{{\UNCHANGED}
}}
\@xx{}
\@x{ vars \.{\defeq} {\langle} now ,\, blockchain ,\, Faulty {\rangle}}
\@pvspace{8.0pt}
\@x{}
\@y{\@s{0}
 The set of all correct nodes in a state
}
\@xx{}
\@x{ Corr \.{\defeq} AllNodes \.{\,\backslash\,} Faulty}
\@pvspace{8.0pt}
\@x{}
\@y{\@s{0}
 \ensuremath{APALACHE} annotations
}
\@xx{}
\@x{ a \.{\ltcolon} b \.{\defeq} a}
\@y{\@s{0}
 type annotation
}
\@xx{}
\@pvspace{8.0pt}
\@x{ NT \.{\defeq} {\STRING}}
\@x{ NodeSet ( S ) \.{\defeq} S \.{\ltcolon} \{ NT \}}
\@x{ EmptyNodeSet \.{\defeq} NodeSet ( \{ \} )}
\@pvspace{8.0pt}
 \@x{ BT \.{\defeq} [ height \.{\mapsto} Int ,\, time \.{\mapsto} Int ,\,
 lastCommit \.{\mapsto} \{ NT \} ,\, VS \.{\mapsto} \{ NT \} ,\, NextVS
 \.{\mapsto} \{ NT \} ]}
\@pvspace{8.0pt}
 \@x{ LBT \.{\defeq} [ header \.{\mapsto} BT ,\, Commits \.{\mapsto} \{ NT \}
 ]}
\@x{}
\@y{\@s{0}
 end of \ensuremath{APALACHE} annotations
}
\@xx{}
\@pvspace{8.0pt}
\@x{}
\@y{
 ***************************** \ensuremath{BLOCKCHAIN}
 ***********************************
}
\@xx{}
\@pvspace{8.0pt}
\@x{}
\@y{\@s{0}
 the header is still within the trusting period
}
\@xx{}
\@x{ InTrustingPeriod ( header ) \.{\defeq}}
\@x{\@s{16.4} now \.{\leq} header . time \.{+} TRUSTING\_PERIOD}
\@pvspace{8.0pt}
\begin{lcom}{2.5}
\begin{cpar}{0}{F}{F}{0}{0}{}
 Given a function \ensuremath{pVotingPower \.{\in} D \.{\rightarrow} Powers}
 for some \ensuremath{D \.{\subseteq} AllNodes
 } and \ensuremath{pNodes \.{\subseteq} D}, test whether the set
 \ensuremath{pNodes \.{\subseteq} AllNodes} has
 more than 2/3 of voting power among the nodes in \ensuremath{D}.
\end{cpar}
\end{lcom}
\@x{ TwoThirds ( pVS ,\, pNodes ) \.{\defeq}}
\@x{\@s{16.4} \.{\LET} TP \.{\defeq} Cardinality ( pVS )}
\@x{\@s{36.79} SP\@s{1.53} \.{\defeq} Cardinality ( pVS \.{\cap} pNodes )}
\@x{\@s{16.4} \.{\IN}}
\@x{\@s{16.4} 3 \.{*} SP \.{>} 2 \.{*} TP}
\@y{\@s{0}
 when thinking in real numbers, not integers: \ensuremath{SP \.{>} 2}.0 /
 3.\ensuremath{0 \.{*} TP
}}
\@xx{}
\@pvspace{8.0pt}
\begin{lcom}{2.5}
\begin{cpar}{0}{F}{F}{0}{0}{}
 Given a set of \ensuremath{FaultyNodes}, test whether the voting power of the
 correct nodes in \ensuremath{D
} is more than 2/3 of the voting power of the faulty nodes in \ensuremath{D}.
\end{cpar}
\end{lcom}
\@x{ IsCorrectPower ( pFaultyNodes ,\, pVS ) \.{\defeq}}
 \@x{\@s{16.4} \.{\LET} FN\@s{0.62} \.{\defeq} pFaultyNodes \.{\cap}
 pVS\@s{8.2}}
\@y{\@s{0}
 faulty nodes in \ensuremath{pNodes
}}
\@xx{}
\@x{\@s{36.79} CN \.{\defeq} pVS \.{\,\backslash\,} pFaultyNodes\@s{45.1}}
\@y{\@s{0}
 correct nodes in \ensuremath{pNodes
}}
\@xx{}
\@x{\@s{36.79} CP\@s{1.26} \.{\defeq} Cardinality ( CN )\@s{57.4}}
\@y{\@s{0}
 power of the correct nodes
}
\@xx{}
\@x{\@s{36.79} FP\@s{1.89} \.{\defeq} Cardinality ( FN )\@s{24.46}}
\@y{\@s{0}
 power of the faulty nodes
}
\@xx{}
\@x{\@s{16.4} \.{\IN}}
\@x{\@s{16.4}}
\@y{\@s{0}
 \ensuremath{CP \.{+} FP \.{=} TP} is the total voting power, so we write
 \ensuremath{CP \.{>} 2}.0 / \ensuremath{3 \.{*} TP} as follows:
}
\@xx{}
\@x{\@s{16.4} CP \.{>} 2 \.{*} FP}
\@y{\@s{0}
 Note: when \ensuremath{FP \.{=} 0}, this implies \ensuremath{CP \.{>} 0}.
}
\@xx{}
\@pvspace{8.0pt}
\@x{}
\@y{\@s{0}
 This is what we believe is the assumption about failures in
 \ensuremath{Tendermint
}}
\@xx{}
\@x{ FaultAssumption ( pFaultyNodes ,\, pNow ,\, pBlockchain ) \.{\defeq}}
\@x{\@s{16.4} \A\, h \.{\in} Heights \.{:}}
 \@x{\@s{24.59} pBlockchain [ h ] . time \.{+} TRUSTING\_PERIOD \.{>} pNow
 \.{\implies}}
\@x{\@s{32.8} IsCorrectPower ( pFaultyNodes ,\, pBlockchain [ h ] . NextVS )}
\@pvspace{8.0pt}
\@x{}
\@y{\@s{0}
 Can a block be produced by a correct peer, or an authenticated
 \ensuremath{Byzantine} peer
}
\@xx{}
\@x{ IsLightBlockAllowedByDigitalSignatures ( ht ,\, block ) \.{\defeq}}
\@x{\@s{16.4} \.{\lor} block . header \.{=} blockchain [ ht ]}
\@y{\@s{0}
 signed by correct and faulty (maybe)
}
\@xx{}
 \@x{\@s{16.4} \.{\lor} block . Commits \.{\subseteq} Faulty \.{\land} block .
 header . height \.{=} ht}
\@y{\@s{0}
 signed only by faulty
}
\@xx{}
\@pvspace{8.0pt}
\begin{lcom}{2.5}
\begin{cpar}{0}{F}{F}{0}{0}{}
 Initialize the \ensuremath{blockchain} to the ultimate height right in the
 initial states.
 We pick the faulty validators statically, but that should not affect the
 light client.
\end{cpar}
\end{lcom}
\@x{ InitToHeight \.{\defeq}}
\@x{\@s{8.2} \.{\land} Faulty \.{\in} {\SUBSET} AllNodes}
\@y{\@s{0}
 some nodes may fail
}
\@xx{}
\@x{\@s{8.2}}
\@y{\@s{0}
 pick the validator sets and last commits
}
\@xx{}
 \@x{\@s{8.2} \.{\land} \E\, vs ,\, lastCommit \.{\in} [ Heights
 \.{\rightarrow} {\SUBSET} AllNodes ] \.{:}}
\@x{\@s{19.31} \E\, timestamp \.{\in} [ Heights \.{\rightarrow} Int ] \.{:}}
\@x{\@s{26.53}}
\@y{\@s{0}
 now is at least as early as the timestamp in the last block
}
\@xx{}
 \@x{\@s{26.53} \.{\land} \E\, tm \.{\in} Int \.{:} now \.{=} tm \.{\land} tm
 \.{\geq} timestamp [ ULTIMATE\_HEIGHT ]}
\@x{\@s{26.53}}
\@y{\@s{0}
 the genesis starts on day 1
}
\@xx{}
\@x{\@s{26.53} \.{\land} timestamp [ 1 ] \.{=} 1}
\@x{\@s{26.53} \.{\land} vs [ 1 ] \.{=} AllNodes}
\@x{\@s{26.53} \.{\land} lastCommit [ 1 ] \.{=} EmptyNodeSet}
 \@x{\@s{26.53} \.{\land} \A\, h \.{\in} Heights \.{\,\backslash\,} \{ 1 \}
 \.{:}}
 \@x{\@s{34.73} \.{\land} lastCommit [ h ] \.{\subseteq} vs [ h \.{-} 1
 ]\@s{48.52}}
\@y{\@s{0}
 the non-validators cannot commit
}
\@xx{}
\@x{\@s{34.73} \.{\land} TwoThirds ( vs [ h \.{-} 1 ] ,\, lastCommit [ h ] )}
\@y{\@s{0}
 the commit has \ensuremath{\.{>}2}/3 of validator votes
}
\@xx{}
\@x{\@s{34.73} \.{\land} IsCorrectPower ( Faulty ,\, vs [ h ] )\@s{31.23}}
\@y{\@s{0}
 the correct validators have \ensuremath{\.{>}2}/3 of power
}
\@xx{}
 \@x{\@s{34.73} \.{\land} timestamp [ h ] \.{>} timestamp [ h \.{-} 1
 ]\@s{17.85}}
\@y{\@s{0}
 the time grows monotonically
}
\@xx{}
 \@x{\@s{34.73} \.{\land} timestamp [ h ] \.{<} timestamp [ h \.{-} 1 ] \.{+}
 TRUSTING\_PERIOD\@s{12.29}}
\@y{\@s{0}
 but not too fast
}
\@xx{}
\@x{\@s{26.53}}
\@y{\@s{0}
 form the block chain out of validator sets and commits (this makes apalache
 faster)
}
\@xx{}
\@x{\@s{26.53} \.{\land} blockchain \.{=} [ h \.{\in} Heights \.{\mapsto}}
\@x{\@s{45.84} [ height \.{\mapsto} h ,\,}
\@x{\@s{48.62} time \.{\mapsto} timestamp [ h ] ,\,}
\@x{\@s{48.62} VS \.{\mapsto} vs [ h ] ,\,}
 \@x{\@s{48.62} NextVS \.{\mapsto} {\IF} h \.{<} ULTIMATE\_HEIGHT \.{\THEN} vs
 [ h \.{+} 1 ] \.{\ELSE} AllNodes ,\,}
\@x{\@s{48.62} lastCommit \.{\mapsto} lastCommit [ h ] ]}
\@x{\@s{45.84} ]}
\@pvspace{16.0pt}
\@x{}
\@y{\@s{0}
 is the \ensuremath{blockchain} in the faulty zone where the
 \ensuremath{Tendermint} security model does not apply
}
\@xx{}
\@x{ InFaultyZone \.{\defeq}}
\@x{\@s{8.2} {\lnot} FaultAssumption ( Faulty ,\, now ,\, blockchain )}
\@pvspace{8.0pt}
\@x{}
\@y{
 ******************** \ensuremath{BLOCKCHAIN} ACTIONS
 *******************************
}
\@xx{}
\begin{lcom}{5.0}
\begin{cpar}{0}{F}{F}{0}{0}{}
Advance the clock by zero or more time units.
\end{cpar}
\end{lcom}
\@x{ AdvanceTime \.{\defeq}}
 \@x{\@s{8.2} \E\, tm \.{\in} Int \.{:} tm \.{\geq} now \.{\land} now \.{'}
 \.{=} tm}
 \@x{\@s{8.2} \.{\land} {\UNCHANGED} {\langle} blockchain ,\, Faulty
 {\rangle}}
\@pvspace{8.0pt}
\begin{lcom}{2.5}
\begin{cpar}{0}{F}{F}{0}{0}{}
 One more process fails. As a result, the \ensuremath{blockchain} may move
 into the faulty zone.
 The light client is not using this action, as the faults are picked in the
 initial state.
 However, this action may be useful when reasoning about fork detection.
\end{cpar}
\end{lcom}
\@x{ OneMoreFault \.{\defeq}}
 \@x{\@s{8.2} \.{\land} \E\, n \.{\in} AllNodes \.{\,\backslash\,} Faulty
 \.{:}}
\@x{\@s{23.41} \.{\land} Faulty \.{'} \.{=} Faulty \.{\cup} \{ n \}}
\@x{\@s{23.41} \.{\land} Faulty \.{'} \.{\neq} AllNodes}
\@y{\@s{0}
 at least process remains non-faulty
}
\@xx{}
\@x{\@s{8.2} \.{\land} {\UNCHANGED} {\langle} now ,\, blockchain {\rangle}}
\@x{}\bottombar\@xx{}
\setboolean{shading}{false}
\begin{lcom}{0}
\begin{cpar}{0}{F}{F}{0}{0}{}
\ensuremath{\.{\,\backslash\,}\.{*}} Modification History
\end{cpar}
\begin{cpar}{0}{F}{F}{0}{0}{}
 \ensuremath{\.{\,\backslash\,}\.{*}} Last modified \ensuremath{Wed}
 \ensuremath{Jun} 10 14:10:54 \ensuremath{CEST} 2020 by \ensuremath{igor
}
\end{cpar}
\begin{cpar}{0}{F}{F}{0}{0}{}
 \ensuremath{\.{\,\backslash\,}\.{*}} Created \ensuremath{Fri}
 \ensuremath{Oct} 11 15:45:11 \ensuremath{CEST} 2019 by \ensuremath{igor
}
\end{cpar}
\end{lcom}

 \clearpage

\tlatex
\setboolean{shading}{true}
\@x{}\moduleLeftDash\@xx{ {\MODULE} Lightclient\_A\_1}\moduleRightDash\@xx{}
\begin{lcom}{2.5}
\begin{cpar}{0}{T}{F}{2.5}{0}{}
\ensuremath{\.{*}
}
\end{cpar}
\begin{cpar}{1}{F}{F}{0}{0}{}
 \ensuremath{\.{*}} A state-machine specification of the lite client,
 following the \ensuremath{English} spec\ensuremath{\.{:}
}
\end{cpar}
\begin{cpar}{0}{F}{F}{0}{0}{}
\ensuremath{\.{*}
}
\end{cpar}
\begin{cpar}{0}{F}{F}{0}{0}{}
\ensuremath{\.{*} https}:\ensuremath{\.{\slsl}github.com}/informalsystems/tendermint-rs/blob/master/docs/spec/lightclient/\ensuremath{verification.md
}
\end{cpar}
\end{lcom}
\@x{ {\EXTENDS} Integers ,\, FiniteSets}
\@pvspace{8.0pt}
\@x{}
\@y{\@s{0}
 the parameters of Light Client
}
\@xx{}
\@x{ {\CONSTANTS}}
\@x{\@s{8.2} TRUSTED\_HEIGHT ,\,}
\@x{\@s{16.4}}
\@y{\@s{0}
 an index of the block header that the light client trusts by social consensus
}
\@xx{}
\@x{\@s{8.2} TARGET\_HEIGHT ,\,}
\@x{\@s{16.4}}
\@y{\@s{0}
 an index of the block header that the light client tries to verify
}
\@xx{}
\@x{\@s{8.2} TRUSTING\_PERIOD ,\,}
\@x{\@s{16.4}}
\@y{\@s{0}
 the period within which the validators are trusted
}
\@xx{}
\@x{\@s{8.2} IS\_PRIMARY\_CORRECT}
\@x{\@s{16.4}}
\@y{\@s{0}
 is primary correct\.{?}
}
\@xx{}
\@pvspace{8.0pt}
\@x{ {\VARIABLES}\@s{24.59}}
\@y{\@s{0}
 see \ensuremath{TypeOK} below for the variable types
}
\@xx{}
\@x{\@s{8.2} state ,\,\@s{40.51}}
\@y{\@s{0}
 the current state of the light client
}
\@xx{}
\@x{\@s{8.2} nextHeight ,\,\@s{14.44}}
\@y{\@s{0}
 the next height to explore by the light client
}
\@xx{}
\@x{\@s{8.2} nprobes\@s{34.67}}
\@y{\@s{0}
 the lite client iteration, or the number of block tests
}
\@xx{}
\@pvspace{8.0pt}
\@x{}
\@y{\@s{0}
 the light store
}
\@xx{}
\@x{ {\VARIABLES}}
\@x{\@s{8.2} fetchedLightBlocks ,\,}
\@y{\@s{0}
 a function from heights to \ensuremath{LightBlocks
}}
\@xx{}
\@x{\@s{8.2} lightBlockStatus ,\,\@s{10.87}}
\@y{\@s{0}
 a function from heights to block statuses
}
\@xx{}
\@x{\@s{8.2} latestVerified\@s{29.28}}
\@y{\@s{0}
 the latest verified block
}
\@xx{}
\@pvspace{8.0pt}
\@x{}
\@y{\@s{0}
 the variables of the lite client
}
\@xx{}
 \@x{ lcvars \.{\defeq} {\langle} state ,\, nextHeight ,\, fetchedLightBlocks
 ,\, lightBlockStatus ,\, latestVerified {\rangle}}
\@pvspace{8.0pt}
\@x{}
\@y{
 ****************** \ensuremath{Blockchain} instance
 **********************************
}
\@xx{}
\@pvspace{8.0pt}
\@x{}
\@y{\@s{0}
 the parameters that are propagated into \ensuremath{Blockchain
}}
\@xx{}
\@x{ {\CONSTANTS}}
\@x{\@s{8.2} AllNodes}
\@x{\@s{16.4}}
\@y{\@s{0}
 a set of all nodes that can act as validators (correct and faulty)
}
\@xx{}
\@pvspace{8.0pt}
\@x{}
\@y{\@s{0}
 the state variables of \ensuremath{Blockchain}, see
 \ensuremath{Blockchain.tla} for the details
}
\@xx{}
\@x{ {\VARIABLES} now ,\, blockchain ,\, Faulty}
\@pvspace{8.0pt}
\@x{}
\@y{\@s{0}
 All the variables of \ensuremath{Blockchain}. For some reason,
 \ensuremath{BC{\bang}vars} does not work
}
\@xx{}
\@x{ bcvars \.{\defeq} {\langle} now ,\, blockchain ,\, Faulty {\rangle}}
\@pvspace{8.0pt}
\begin{lcom}{7.5}
\begin{cpar}{0}{F}{F}{0}{0}{}
Create an instance of \ensuremath{Blockchain}.
 We could write \ensuremath{{\EXTENDS}} \ensuremath{Blockchain}, but then all
 the constants and state variables
 would be hidden inside the \ensuremath{Blockchain} module.
\end{cpar}
\end{lcom}
\@x{ ULTIMATE\_HEIGHT \.{\defeq} TARGET\_HEIGHT \.{+} 1}
\@pvspace{8.0pt}
\@x{ BC \.{\defeq} {\INSTANCE} Blockchain\_A\_1 {\WITH}}
 \@x{\@s{8.2} now \.{\leftarrow} now ,\, blockchain \.{\leftarrow} blockchain
 ,\, Faulty \.{\leftarrow} Faulty}
\@pvspace{8.0pt}
\@x{}
\@y{
 ************************* \ensuremath{Lite} client
 ***********************************
}
\@xx{}
\@pvspace{8.0pt}
\@x{}
\@y{\@s{0}
 the heights on which the light client is working
}
\@xx{}
\@x{ HEIGHTS \.{\defeq} TRUSTED\_HEIGHT \.{\dotdot} TARGET\_HEIGHT}
\@pvspace{8.0pt}
\@x{}
\@y{\@s{0}
 the control states of the lite client
}
\@xx{}
 \@x{ States \.{\defeq} \{\@w{working} ,\,\@w{finishedSuccess}
 ,\,\@w{finishedFailure} \}}
\@pvspace{8.0pt}
\begin{lcom}{2.5}
\begin{cpar}{0}{T}{F}{2.5}{0}{}
*
\end{cpar}
\begin{cpar}{1}{F}{F}{0}{0}{}
Check the precondition of \ensuremath{ValidAndVerified}.
\end{cpar}
\vshade{5.0}
\begin{cpar}{0}{F}{F}{0}{0}{}
[LCV-FUNC-VALID.\ensuremath{1{\coloncolon}}TLA-PRE.1]
\end{cpar}
\end{lcom}
\@x{ ValidAndVerifiedPre ( trusted ,\, untrusted ) \.{\defeq}}
\@x{\@s{8.2} \.{\LET} thdr\@s{2.04} \.{\defeq} trusted . header}
\@x{\@s{28.59} uhdr \.{\defeq} untrusted . header}
\@x{\@s{8.2} \.{\IN}}
\@x{\@s{8.2} \.{\land} BC {\bang} InTrustingPeriod ( thdr )}
\@x{\@s{8.2} \.{\land} thdr . height \.{<} uhdr . height}
\@x{\@s{19.31}}
\@y{\@s{0}
 the trusted block has been created earlier (no drift here)
}
\@xx{}
\@x{\@s{8.2} \.{\land} thdr . time \.{\leq} uhdr . time}
\@x{\@s{8.2} \.{\land} untrusted . Commits \.{\subseteq} uhdr . VS}
\@x{\@s{8.2} \.{\land} \.{\LET} TP \.{\defeq} Cardinality ( uhdr . VS )}
\@x{\@s{39.71} SP\@s{1.53} \.{\defeq} Cardinality ( untrusted . Commits )}
\@x{\@s{19.31} \.{\IN}}
\@x{\@s{19.31} 3 \.{*} SP \.{>} 2 \.{*} TP}
 \@x{\@s{8.2} \.{\land} thdr . height \.{+} 1 \.{=} uhdr . height \.{\implies}
 thdr . NextVS \.{=} uhdr . VS}
\begin{lcom}{0}
\begin{cpar}{0}{T}{F}{12.5}{0}{}
 As we do not have explicit hashes we ignore these three checks of the
 \ensuremath{English} spec:
\end{cpar}
\vshade{5.0}
\begin{cpar}{0}{T}{T}{0}{0}{
1}
 . ``\ensuremath{trusted.Commit} is a commit is for the header
 \ensuremath{trusted.Header},
 \ensuremath{i.e}. it contains the correct hash of the header''.
\end{cpar}
\begin{cpar}{1}{F}{F}{0}{0}{}
2. \ensuremath{untrusted.Validators \.{=} hash(untrusted.Header.Validators)
}
\end{cpar}
\begin{cpar}{0}{F}{F}{0}{0}{}
 3. \ensuremath{untrusted.NextValidators \.{=}
 hash(untrusted.Header.NextValidators)
}
\end{cpar}
\vshade{10.0}
\begin{cpar}{1}{F}{F}{0}{0}{}
\ensuremath{\.{*}
}
\end{cpar}
\begin{cpar}{0}{T}{F}{5.0}{0}{}
 \ensuremath{\.{*}} Check that the commits in an \ensuremath{untrusted} block
 form 1/3 of the next \ensuremath{validators
}
\end{cpar}
\begin{cpar}{0}{F}{F}{0}{0}{}
\ensuremath{\.{*}} in a trusted header.
\end{cpar}
\end{lcom}
\@x{ SignedByOneThirdOfTrusted ( trusted ,\, untrusted ) \.{\defeq}}
 \@x{\@s{8.2} \.{\LET} TP \.{\defeq} Cardinality ( trusted . header . NextVS
 )}
 \@x{\@s{28.59} SP\@s{1.53} \.{\defeq} Cardinality ( untrusted . Commits
 \.{\cap} trusted . header . NextVS )}
\@x{\@s{8.2} \.{\IN}}
\@x{\@s{8.2} 3 \.{*} SP \.{>} TP}
\@pvspace{8.0pt}
\begin{lcom}{2.5}
\begin{cpar}{0}{T}{F}{2.5}{0}{}
*
\end{cpar}
\begin{cpar}{1}{F}{F}{0}{0}{}
 Check, whether an \ensuremath{untrusted} block is valid and verifiable
 \ensuremath{w.r.t}. a trusted header.
\end{cpar}
\vshade{5.0}
\begin{cpar}{0}{F}{F}{0}{0}{}
[LCV-FUNC-VALID.\ensuremath{1{\coloncolon}}TLA.1]
\end{cpar}
\end{lcom}
\@x{ ValidAndVerified ( trusted ,\, untrusted ) \.{\defeq}}
\@x{\@s{16.4} {\IF} {\lnot} ValidAndVerifiedPre ( trusted ,\, untrusted )}
\@x{\@s{16.4} \.{\THEN}\@w{FAILED\_VERIFICATION}}
 \@x{\@s{16.4} \.{\ELSE} {\IF} {\lnot} BC {\bang} InTrustingPeriod ( untrusted
 . header )}
\begin{lcom}{17.5}
\begin{cpar}{0}{F}{F}{0}{0}{}
We leave the following test for the documentation purposes.
 The implementation should do this test, as signature verification may be
 slow.
 In the TLA+ specification, \ensuremath{ValidAndVerified} happens in no time.
\end{cpar}
\end{lcom}
\@x{\@s{16.4} \.{\THEN}\@w{FAILED\_TRUSTING\_PERIOD}}
 \@x{\@s{16.4} \.{\ELSE} {\IF} untrusted . header . height \.{=} trusted .
 header . height \.{+} 1}
\@x{\@s{63.96} \.{\lor} SignedByOneThirdOfTrusted ( trusted ,\, untrusted )}
\@x{\@s{47.71} \.{\THEN}\@w{OK}}
\@x{\@s{47.71} \.{\ELSE}\@w{CANNOT\_VERIFY}}
\@pvspace{8.0pt}
\begin{lcom}{2.5}
\begin{cpar}{0}{F}{F}{0}{0}{}
Initial states of the light client.
 Initially, only the trusted light block is present.
\end{cpar}
\end{lcom}
\@x{ LCInit \.{\defeq}}
\@x{\@s{16.4} \.{\land}\@s{2.73} state \.{=}\@w{working}}
\@x{\@s{16.4} \.{\land}\@s{2.73} nextHeight \.{=} TARGET\_HEIGHT}
\@x{\@s{16.4} \.{\land}\@s{2.73} nprobes \.{=} 0\@s{4.1}}
\@y{\@s{0}
 no tests have been done so far
}
\@xx{}
 \@x{\@s{16.4} \.{\land}\@s{2.73} \.{\LET} trustedBlock \.{\defeq} blockchain
 [ TRUSTED\_HEIGHT ]}
 \@x{\@s{50.64} trustedLightBlock \.{\defeq} [ header \.{\mapsto} trustedBlock
 ,\, Commits \.{\mapsto} AllNodes ]}
\@x{\@s{30.24} \.{\IN}}
\@x{\@s{34.34}}
\@y{\@s{0}
 initially, \ensuremath{fetchedLightBlocks} is a function of one element,
 \ensuremath{i.e}., \ensuremath{TRUSTED\_HEIGHT
}}
\@xx{}
 \@x{\@s{34.34} \.{\land} fetchedLightBlocks \.{=} [ h \.{\in} \{
 TRUSTED\_HEIGHT \} \.{\mapsto} trustedLightBlock ]}
\@x{\@s{34.34}}
\@y{\@s{0}
 initially, \ensuremath{lightBlockStatus} is a function of one element,
 \ensuremath{i.e}., \ensuremath{TRUSTED\_HEIGHT
}}
\@xx{}
 \@x{\@s{34.34} \.{\land} lightBlockStatus \.{=} [ h \.{\in} \{
 TRUSTED\_HEIGHT \} \.{\mapsto}\@w{StateVerified} ]}
\@x{\@s{34.34}}
\@y{\@s{0}
 the latest verified block the the trusted block
}
\@xx{}
\@x{\@s{34.34} \.{\land} latestVerified \.{=} trustedLightBlock}
\@pvspace{8.0pt}
\@x{}
\@y{\@s{0}
 block should contain a copy of the block from the reference chain, with a
 matching commit
}
\@xx{}
\@x{ CopyLightBlockFromChain ( block ,\, height ) \.{\defeq}}
\@x{\@s{16.4} \.{\LET} ref \.{\defeq} blockchain [ height ]}
\@x{\@s{36.79} lastCommit \.{\defeq}}
\@x{\@s{44.99} {\IF} height \.{<} ULTIMATE\_HEIGHT}
\@x{\@s{44.99} \.{\THEN} blockchain [ height \.{+} 1 ] . lastCommit}
\@x{\@s{53.19}}
\@y{\@s{0}
 for the ultimate block, which we never use, as \ensuremath{ULTIMATE\_HEIGHT
 \.{=} TARGET\_HEIGHT \.{+} 1
}}
\@xx{}
\@x{\@s{44.99} \.{\ELSE} blockchain [ height ] . VS}
\@x{\@s{16.4} \.{\IN}}
 \@x{\@s{16.4} block \.{=} [ header \.{\mapsto} ref ,\, Commits \.{\mapsto}
 lastCommit ]}
\@pvspace{8.0pt}
\@x{}
\@y{\@s{0}
 Either the primary is correct and the block comes from the reference chain,
}
\@xx{}
\@x{}
\@y{\@s{0}
 or the block is produced by a faulty primary.
}
\@xx{}
\@x{}
\@y{}
\@xx{}
\@x{}
\@y{\@s{0}
 [LCV-FUNC-FETCH.\ensuremath{1{\coloncolon}}TLA.1]
}
\@xx{}
\@x{ FetchLightBlockInto ( block ,\, height ) \.{\defeq}}
\@x{\@s{16.4} {\IF} IS\_PRIMARY\_CORRECT}
\@x{\@s{16.4} \.{\THEN} CopyLightBlockFromChain ( block ,\, height )}
 \@x{\@s{16.4} \.{\ELSE} BC {\bang} IsLightBlockAllowedByDigitalSignatures (
 height ,\, block )}
\@pvspace{8.0pt}
\@x{}
\@y{\@s{0}
 add a block into the light store
}
\@xx{}
\@x{}
\@y{}
\@xx{}
\@x{}
\@y{\@s{0}
 [LCV-FUNC-UPDATE.\ensuremath{1{\coloncolon}}TLA.1]
}
\@xx{}
\@x{ LightStoreUpdateBlocks ( lightBlocks ,\, block ) \.{\defeq}}
\@x{\@s{16.4} \.{\LET} ht \.{\defeq} block . header . height \.{\IN}}
 \@x{\@s{16.4} [ h \.{\in} {\DOMAIN} lightBlocks \.{\cup} \{ ht \}
 \.{\mapsto}}
 \@x{\@s{29.15} {\IF} h \.{=} ht \.{\THEN} block \.{\ELSE} lightBlocks [ h ]
 ]}
\@pvspace{8.0pt}
\@x{}
\@y{\@s{0}
 update the state of a light block
}
\@xx{}
\@x{}
\@y{}
\@xx{}
\@x{}
\@y{\@s{0}
 [LCV-FUNC-UPDATE.\ensuremath{1{\coloncolon}}TLA.1]
}
\@xx{}
\@x{ LightStoreUpdateStates ( statuses ,\, ht ,\, blockState ) \.{\defeq}}
\@x{\@s{16.4} [ h \.{\in} {\DOMAIN} statuses \.{\cup} \{ ht \} \.{\mapsto}}
 \@x{\@s{29.15} {\IF} h \.{=} ht \.{\THEN} blockState \.{\ELSE} statuses [ h ]
 ]}
\@pvspace{8.0pt}
\@x{}
\@y{\@s{0}
 Check, whether \ensuremath{newHeight} is a possible next height for the
 light client.
}
\@xx{}
\@x{}
\@y{}
\@xx{}
\@x{}
\@y{\@s{0}
 [LCV-FUNC-SCHEDULE.\ensuremath{1{\coloncolon}}TLA.1]
}
\@xx{}
 \@x{ CanScheduleTo ( newHeight ,\, pLatestVerified ,\, pNextHeight ,\,
 pTargetHeight ) \.{\defeq}}
 \@x{\@s{16.4} \.{\LET} ht \.{\defeq} pLatestVerified . header . height
 \.{\IN}}
\@x{\@s{16.4} \.{\lor} \.{\land} ht \.{=} pNextHeight}
\@x{\@s{27.51} \.{\land} ht \.{<} pTargetHeight}
\@x{\@s{27.51} \.{\land} pNextHeight \.{<} newHeight}
\@x{\@s{27.51} \.{\land} newHeight \.{\leq} pTargetHeight}
\@x{\@s{16.4} \.{\lor} \.{\land} ht \.{<} pNextHeight}
\@x{\@s{27.51} \.{\land} ht \.{<} pTargetHeight}
\@x{\@s{27.51} \.{\land} ht \.{<} newHeight}
\@x{\@s{27.51} \.{\land} newHeight \.{<} pNextHeight}
\@x{\@s{16.4} \.{\lor} \.{\land} ht \.{=} pTargetHeight}
\@x{\@s{27.51} \.{\land} newHeight \.{=} pTargetHeight}
\@pvspace{8.0pt}
\@x{}
\@y{\@s{0}
 The loop of \ensuremath{VerifyToTarget}.
}
\@xx{}
\@x{}
\@y{}
\@xx{}
\@x{}
\@y{\@s{0}
 [LCV-FUNC-MAIN.\ensuremath{1{\coloncolon}}TLA-LOOP.1]
}
\@xx{}
\@x{ VerifyToTargetLoop \.{\defeq}}
\@x{\@s{24.59}}
\@y{\@s{0}
 the loop condition is true
}
\@xx{}
 \@x{\@s{16.4} \.{\land} latestVerified . header . height \.{<}
 TARGET\_HEIGHT}
\@x{\@s{24.59}}
\@y{\@s{0}
 pick a light block, which will be constrained later
}
\@xx{}
\@x{\@s{16.4} \.{\land} \E\, current \.{\in} BC {\bang} LightBlocks \.{:}}
\@x{\@s{31.61}}
\@y{\@s{0}
 Get next \ensuremath{LightBlock} for verification
}
\@xx{}
 \@x{\@s{31.61} \.{\land} {\IF} nextHeight \.{\in} {\DOMAIN}
 fetchedLightBlocks}
\@x{\@s{42.72} \.{\THEN}}
\@y{\@s{0}
 copy the block from the light store
}
\@xx{}
\@x{\@s{74.03} \.{\land} current \.{=} fetchedLightBlocks [ nextHeight ]}
\@x{\@s{74.03} \.{\land} {\UNCHANGED} fetchedLightBlocks}
\@x{\@s{42.72} \.{\ELSE}}
\@y{\@s{0}
 retrieve a light block and save it in the light store
}
\@xx{}
\@x{\@s{74.03} \.{\land} FetchLightBlockInto ( current ,\, nextHeight )}
 \@x{\@s{74.03} \.{\land} fetchedLightBlocks \.{'} \.{=}
 LightStoreUpdateBlocks ( fetchedLightBlocks ,\, current )}
\@x{\@s{31.61}}
\@y{\@s{0}
 Record that one more probe has been done (for complexity and model checking)
}
\@xx{}
\@x{\@s{31.61} \.{\land} nprobes \.{'} \.{=} nprobes \.{+} 1}
\@x{\@s{31.61}}
\@y{\@s{0}
 Verify the current block
}
\@xx{}
 \@x{\@s{31.61} \.{\land} \.{\LET} verdict \.{\defeq} ValidAndVerified (
 latestVerified ,\, current ) \.{\IN}}
\@x{\@s{42.72}}
\@y{\@s{0}
 Decide whether/how to continue
}
\@xx{}
\@x{\@s{42.72} {\CASE} verdict \.{=}\@w{OK} \.{\rightarrow}}
 \@x{\@s{55.02} \.{\land} lightBlockStatus \.{'} \.{=} LightStoreUpdateStates
 ( lightBlockStatus ,\, nextHeight ,\,\@w{StateVerified} )}
\@x{\@s{55.02} \.{\land} latestVerified \.{'} \.{=} current}
\@x{\@s{55.02} \.{\land} state \.{'} \.{=}}
 \@x{\@s{78.43} {\IF} latestVerified \.{'} . header . height \.{<}
 TARGET\_HEIGHT}
\@x{\@s{78.43} \.{\THEN}\@w{working}}
\@x{\@s{78.43} \.{\ELSE}\@w{finishedSuccess}}
\@x{\@s{55.02} \.{\land} \E\, newHeight \.{\in} HEIGHTS \.{:}}
 \@x{\@s{66.13} \.{\land} CanScheduleTo ( newHeight ,\, current ,\, nextHeight
 ,\, TARGET\_HEIGHT )}
\@x{\@s{66.13} \.{\land} nextHeight \.{'} \.{=} newHeight}
\@pvspace{8.0pt}
 \@x{\@s{42.72} {\Box}\@s{4.82} verdict \.{=}\@w{CANNOT\_VERIFY}
 \.{\rightarrow}}
\begin{lcom}{40.0}
\begin{cpar}{0}{F}{F}{0}{0}{}
do nothing: the light block current passed validation, but the validator
 set is too different to verify it. We keep the state of
 current at \ensuremath{StateUnverified}. For a later iteration, Schedule
 might decide to try verification of that light block again.
\end{cpar}
\end{lcom}
 \@x{\@s{57.4} \.{\land} lightBlockStatus \.{'} \.{=} LightStoreUpdateStates (
 lightBlockStatus ,\, nextHeight ,\,\@w{StateUnverified} )}
\@x{\@s{57.4} \.{\land} \E\, newHeight \.{\in} HEIGHTS \.{:}}
 \@x{\@s{68.51} \.{\land} CanScheduleTo ( newHeight ,\, latestVerified ,\,
 nextHeight ,\, TARGET\_HEIGHT )}
\@x{\@s{68.51} \.{\land} nextHeight \.{'} \.{=} newHeight}
 \@x{\@s{57.4} \.{\land} {\UNCHANGED} {\langle} latestVerified ,\, state
 {\rangle}}
\@pvspace{8.0pt}
\@x{\@s{45.1} {\Box}\@s{4.82} {\OTHER} \.{\rightarrow}}
\@x{\@s{57.4}}
\@y{\@s{0}
 verdict is some error code
}
\@xx{}
 \@x{\@s{57.4} \.{\land} lightBlockStatus \.{'} \.{=} LightStoreUpdateStates (
 lightBlockStatus ,\, nextHeight ,\,\@w{StateFailed} )}
\@x{\@s{57.4} \.{\land} state \.{'} \.{=}\@w{finishedFailure}}
 \@x{\@s{57.4} \.{\land} {\UNCHANGED} {\langle} latestVerified ,\, nextHeight
 {\rangle}}
\@pvspace{8.0pt}
\@x{}
\@y{\@s{0}
 The terminating condition of \ensuremath{VerifyToTarget}.
}
\@xx{}
\@x{}
\@y{}
\@xx{}
\@x{}
\@y{\@s{0}
 [LCV-FUNC-MAIN.\ensuremath{1{\coloncolon}}TLA-LOOPCOND.1]
}
\@xx{}
\@x{ VerifyToTargetDone \.{\defeq}}
 \@x{\@s{16.4} \.{\land} latestVerified . header . height \.{\geq}
 TARGET\_HEIGHT}
\@x{\@s{16.4} \.{\land} state \.{'} \.{=}\@w{finishedSuccess}}
 \@x{\@s{16.4} \.{\land} {\UNCHANGED} {\langle} nextHeight ,\, nprobes ,\,
 fetchedLightBlocks ,\, lightBlockStatus ,\, latestVerified {\rangle}}
\@pvspace{8.0pt}
\@x{}
\@y{
 ******************** \ensuremath{Lite} client \ensuremath{\.{+} Blockchain}
 ******************
}
\@xx{}
\@x{ Init \.{\defeq}}
\@x{\@s{16.4}}
\@y{\@s{0}
 the \ensuremath{blockchain} is initialized immediately to the
 \ensuremath{ULTIMATE\_HEIGHT
}}
\@xx{}
\@x{\@s{16.4} \.{\land} BC {\bang} InitToHeight}
\@x{\@s{16.4}}
\@y{\@s{0}
 the light client starts
}
\@xx{}
\@x{\@s{16.4} \.{\land} LCInit}
\@pvspace{8.0pt}
\begin{lcom}{5.0}
\begin{cpar}{0}{F}{F}{0}{0}{}
The system step is very simple.
 The light client is either executing \ensuremath{VerifyToTarget}, or it has
 terminated.
 (In the latter case, a model checker reports a deadlock.)
 Simultaneously, the global clock may advance.
\end{cpar}
\end{lcom}
\@x{ Next \.{\defeq}}
\@x{\@s{16.4} \.{\land} state \.{=}\@w{working}}
\@x{\@s{16.4} \.{\land} VerifyToTargetLoop \.{\lor} VerifyToTargetDone}
\@x{\@s{16.4} \.{\land} BC {\bang} AdvanceTime}
\@y{\@s{0}
 the global clock is advanced by zero or more time units
}
\@xx{}
\@pvspace{8.0pt}
\@x{}
\@y{
 ************************ Types *****************************************
}
\@xx{}
\@x{ TypeOK \.{\defeq}}
\@x{\@s{16.4} \.{\land}\@s{9.74} state \.{\in} States}
\@x{\@s{16.4} \.{\land}\@s{9.74} nextHeight \.{\in} HEIGHTS}
 \@x{\@s{16.4} \.{\land}\@s{9.74} latestVerified \.{\in} BC {\bang}
 LightBlocks}
\@x{\@s{16.4} \.{\land}\@s{9.74} \E\, HS \.{\in} {\SUBSET} HEIGHTS \.{:}}
 \@x{\@s{41.35} \.{\land} fetchedLightBlocks \.{\in} [ HS \.{\rightarrow} BC
 {\bang} LightBlocks ]}
\@x{\@s{41.35} \.{\land} lightBlockStatus}
 \@x{\@s{60.66} \.{\in} [ HS \.{\rightarrow} \{\@w{StateVerified}
 ,\,\@w{StateUnverified} ,\,\@w{StateFailed} \} ]}
\@pvspace{8.0pt}
\@x{}
\@y{
 ************************ Properties *****************************************
}
\@xx{}
\@pvspace{8.0pt}
\@x{}
\@y{\@s{0}
 The properties to check
}
\@xx{}
\@x{}
\@y{\@s{0}
 this invariant candidate is false
}
\@xx{}
\@x{ NeverFinish \.{\defeq}}
\@x{\@s{16.4} state \.{=}\@w{working}}
\@pvspace{8.0pt}
\@x{}
\@y{\@s{0}
 this invariant candidate is false
}
\@xx{}
\@x{ NeverFinishNegative \.{\defeq}}
\@x{\@s{16.4} state \.{\neq}\@w{finishedFailure}}
\@pvspace{8.0pt}
\@x{}
\@y{\@s{0}
 This invariant holds true, when the primary is correct.
}
\@xx{}
\@x{}
\@y{\@s{0}
 This invariant candidate is false when the primary is faulty.
}
\@xx{}
\@x{ NeverFinishNegativeWhenTrusted \.{\defeq}}
\@x{\@s{16.4}}
\@y{
 (\ensuremath{minTrustedHeight \.{\leq} TRUSTED\_HEIGHT})
}
\@xx{}
\@x{\@s{16.4} BC {\bang} InTrustingPeriod ( blockchain [ TRUSTED\_HEIGHT ] )}
\@x{\@s{32.04} \.{\implies} state \.{\neq}\@w{finishedFailure}}
\@pvspace{8.0pt}
\@x{}
\@y{\@s{0}
 this invariant candidate is false
}
\@xx{}
\@x{ NeverFinishPositive \.{\defeq}}
\@x{\@s{16.4} state \.{\neq}\@w{finishedSuccess}}
\@pvspace{8.0pt}
\begin{lcom}{5.0}
\begin{cpar}{0}{F}{F}{0}{0}{}
*
 Correctness states that all the obtained headers are exactly like in the
 \ensuremath{blockchain}.
\end{cpar}
\vshade{5.0}
\begin{cpar}{0}{F}{F}{0}{0}{}
 It is always the case that every verified header in \ensuremath{LightStore}
 was generated by
 an instance of \ensuremath{Tendermint} consensus.
\end{cpar}
\vshade{5.0}
\begin{cpar}{0}{F}{F}{0}{0}{}
[LCV-DIST-SAFE.\ensuremath{1{\coloncolon}}CORRECTNESS-INV.1]
\end{cpar}
\end{lcom}
\@x{ CorrectnessInv \.{\defeq}}
\@x{\@s{16.4} \A\, h \.{\in} {\DOMAIN} fetchedLightBlocks \.{:}}
\@x{\@s{27.72} lightBlockStatus [ h ] \.{=}\@w{StateVerified} \.{\implies}}
\@x{\@s{44.12} fetchedLightBlocks [ h ] . header \.{=} blockchain [ h ]}
\@pvspace{8.0pt}
\begin{lcom}{2.5}
\begin{cpar}{0}{T}{F}{2.5}{0}{}
*
\end{cpar}
\begin{cpar}{1}{F}{F}{0}{0}{}
 Check that the sequence of the headers in \ensuremath{storedLightBlocks}
 satisfies \ensuremath{ValidAndVerified \.{=} \@w{OK}} pairwise
 This property is easily violated, whenever a header cannot be trusted
 anymore.
\end{cpar}
\end{lcom}
\@x{ StoredHeadersAreVerifiedInv \.{\defeq}}
\@x{\@s{16.4} state \.{=}\@w{finishedSuccess}}
\@x{\@s{32.8} \.{\implies}}
\@x{\@s{32.8} \A\, lh ,\, rh \.{\in} {\DOMAIN} fetchedLightBlocks \.{:}}
\@y{\@s{0}
 for every pair of different stored headers
}
\@xx{}
\@x{\@s{44.12} \.{\lor} lh\@s{0.99} \.{\geq} rh}
\@x{\@s{55.23}}
\@y{\@s{0}
 either there is a header between them
}
\@xx{}
\@x{\@s{44.12} \.{\lor} \E\, mh \.{\in} {\DOMAIN} fetchedLightBlocks \.{:}}
\@x{\@s{59.33} lh \.{<} mh \.{\land} mh \.{<} rh}
\@x{\@s{55.23}}
\@y{\@s{0}
 or we can verify the right one using the left one
}
\@xx{}
 \@x{\@s{44.12} \.{\lor}\@w{OK} \.{=} ValidAndVerified ( fetchedLightBlocks [
 lh ] ,\, fetchedLightBlocks [ rh ] )}
\@pvspace{8.0pt}
\@x{}
\@y{\@s{0}
 An improved version of \ensuremath{StoredHeadersAreSound}, assuming that a
 header may be not trusted.
}
\@xx{}
\@x{}
\@y{\@s{0}
 This invariant candidate is also violated,
}
\@xx{}
\@x{}
\@y{\@s{0}
 as there may be some unverified blocks left in the middle.
}
\@xx{}
\@x{ StoredHeadersAreVerifiedOrNotTrustedInv \.{\defeq}}
\@x{\@s{16.4} state \.{=}\@w{finishedSuccess}}
\@x{\@s{32.8} \.{\implies}}
\@x{\@s{32.8} \A\, lh ,\, rh \.{\in} {\DOMAIN} fetchedLightBlocks \.{:}}
\@y{\@s{0}
 for every pair of different stored headers
}
\@xx{}
\@x{\@s{44.12} \.{\lor} lh\@s{0.99} \.{\geq} rh}
\@x{\@s{55.23}}
\@y{\@s{0}
 either there is a header between them
}
\@xx{}
\@x{\@s{44.12} \.{\lor} \E\, mh \.{\in} {\DOMAIN} fetchedLightBlocks \.{:}}
\@x{\@s{59.33} lh \.{<} mh \.{\land} mh \.{<} rh}
\@x{\@s{55.23}}
\@y{\@s{0}
 or we can verify the right one using the left one
}
\@xx{}
 \@x{\@s{44.12} \.{\lor}\@w{OK} \.{=} ValidAndVerified ( fetchedLightBlocks [
 lh ] ,\, fetchedLightBlocks [ rh ] )}
\@x{\@s{55.23}}
\@y{\@s{0}
 or the left header is outside the trusting period, so no guarantees
}
\@xx{}
 \@x{\@s{44.12} \.{\lor} {\lnot} BC {\bang} InTrustingPeriod (
 fetchedLightBlocks [ lh ] . header )}
\@pvspace{8.0pt}
\begin{lcom}{2.5}
\begin{cpar}{0}{T}{F}{2.5}{0}{}
\ensuremath{\.{*}
}
\end{cpar}
\begin{cpar}{1}{F}{F}{0}{0}{}
 \ensuremath{\.{*}} An improved version of
 \ensuremath{StoredHeadersAreSoundOrNotTrusted,\,
}
\end{cpar}
\begin{cpar}{0}{F}{F}{0}{0}{}
 \ensuremath{\.{*}} checking the property only for the verified
 headers\ensuremath{.
}
\end{cpar}
\begin{cpar}{0}{F}{F}{0}{0}{}
\ensuremath{\.{*}} This invariant holds true.
\end{cpar}
\end{lcom}
\@x{ ProofOfChainOfTrustInv \.{\defeq}}
\@x{\@s{16.4} state \.{=}\@w{finishedSuccess}}
\@x{\@s{32.8} \.{\implies}}
\@x{\@s{32.8} \A\, lh ,\, rh \.{\in} {\DOMAIN} fetchedLightBlocks \.{:}}
\@x{\@s{58.66}}
\@y{\@s{0}
 for every pair of stored headers that have been verified
}
\@xx{}
\@x{\@s{44.12} \.{\lor} lh \.{\geq} rh}
 \@x{\@s{44.12} \.{\lor} lightBlockStatus [ lh ]\@s{1.66}
 \.{=}\@w{StateUnverified}}
\@x{\@s{44.12} \.{\lor} lightBlockStatus [ rh ] \.{=}\@w{StateUnverified}}
\@x{\@s{55.23}}
\@y{\@s{0}
 either there is a header between them
}
\@xx{}
\@x{\@s{44.12} \.{\lor} \E\, mh \.{\in} {\DOMAIN} fetchedLightBlocks \.{:}}
 \@x{\@s{59.33} lh \.{<} mh \.{\land} mh \.{<} rh \.{\land} lightBlockStatus [
 mh ] \.{=}\@w{StateVerified}}
\@x{\@s{55.23}}
\@y{\@s{0}
 or the left header is outside the trusting period, so no guarantees
}
\@xx{}
 \@x{\@s{44.12} \.{\lor} {\lnot} ( BC {\bang} InTrustingPeriod (
 fetchedLightBlocks [ lh ] . header ) )}
\@x{\@s{55.23}}
\@y{\@s{0}
 or we can verify the right one using the left one
}
\@xx{}
 \@x{\@s{44.12} \.{\lor}\@w{OK} \.{=} ValidAndVerified ( fetchedLightBlocks [
 lh ] ,\, fetchedLightBlocks [ rh ] )}
\@pvspace{8.0pt}
\begin{lcom}{2.5}
\begin{cpar}{0}{T}{F}{2.5}{0}{}
\ensuremath{\.{*}
}
\end{cpar}
\begin{cpar}{1}{F}{F}{0}{0}{}
 \ensuremath{\.{*}} When the light client terminates, there are no failed
 blocks. (Otherwise, someone lied to us.)
\end{cpar}
\end{lcom}
\@x{ NoFailedBlocksOnSuccessInv \.{\defeq}}
\@x{\@s{16.4} state \.{=}\@w{finishedSuccess} \.{\implies}}
\@x{\@s{32.8} \A\, h \.{\in} {\DOMAIN} fetchedLightBlocks \.{:}}
\@x{\@s{44.12} lightBlockStatus [ h ] \.{\neq}\@w{StateFailed}}
\@pvspace{8.0pt}
\@x{}
\@y{\@s{0}
 This property states that whenever the light client finishes with a positive
 outcome,
}
\@xx{}
\@x{}
\@y{\@s{0}
 the trusted header is still within the trusting period.
}
\@xx{}
\@x{}
\@y{\@s{0}
 We expect this property to be violated. And \ensuremath{Apalache} shows us a
 counterexample.
}
\@xx{}
\@x{ PositiveBeforeTrustedHeaderExpires \.{\defeq}}
 \@x{\@s{16.4} ( state \.{=}\@w{finishedSuccess} ) \.{\implies} BC {\bang}
 InTrustingPeriod ( blockchain [ TRUSTED\_HEIGHT ] )}
\@pvspace{8.0pt}
\@x{}
\@y{\@s{0}
 If the primary is correct and the initial trusted block has not expired,
}
\@xx{}
\@x{}
\@y{\@s{0}
 then whenever the algorithm terminates, it reports ``success''
}
\@xx{}
\@x{ CorrectPrimaryAndTimeliness \.{\defeq}}
 \@x{\@s{8.2} ( BC {\bang} InTrustingPeriod ( blockchain [ TRUSTED\_HEIGHT ]
 )}
 \@x{\@s{16.18} \.{\land} state \.{\neq}\@w{working} \.{\land}
 IS\_PRIMARY\_CORRECT ) \.{\implies}}
\@x{\@s{24.38} state \.{=}\@w{finishedSuccess}}
\@pvspace{8.0pt}
\begin{lcom}{5.0}
\begin{cpar}{0}{F}{F}{0}{0}{}
*
 If the primary is correct and there is a trusted block that has not expired,
 then whenever the algorithm terminates, it reports ``success''.
\end{cpar}
\vshade{5.0}
\begin{cpar}{0}{F}{F}{0}{0}{}
[LCV-DIST-LIVE.\ensuremath{1{\coloncolon}}SUCCESS-CORR-PRIMARY-CHAIN-OF-TRUST.1]
\end{cpar}
\end{lcom}
\@x{ SuccessOnCorrectPrimaryAndChainOfTrust \.{\defeq}}
\@x{\@s{8.2} ( \E\, h \.{\in} {\DOMAIN} fetchedLightBlocks \.{:}}
 \@x{\@s{25.18} lightBlockStatus [ h ] \.{=}\@w{StateVerified} \.{\land} BC
 {\bang} InTrustingPeriod ( blockchain [ h ] )}
 \@x{\@s{12.08} \.{\land} state \.{\neq}\@w{working} \.{\land}
 IS\_PRIMARY\_CORRECT ) \.{\implies}}
\@x{\@s{20.28} state \.{=}\@w{finishedSuccess}}
\@pvspace{8.0pt}
\@x{}
\@y{\@s{0}
 Lite Client Completeness: If header \ensuremath{h} was correctly generated
 by an instance
}
\@xx{}
\@x{}
\@y{\@s{0}
 of \ensuremath{Tendermint} consensus (and its age is less than the trusting
 period),
}
\@xx{}
\@x{}
\@y{\@s{0}
 then the lite client should eventually set \ensuremath{trust(h)} to true.
}
\@xx{}
\@x{}
\@y{}
\@xx{}
\@x{}
\@y{\@s{0}
 Note that Completeness assumes that the lite client communicates with a
 correct full node.
}
\@xx{}
\@x{}
\@y{}
\@xx{}
\@x{}
\@y{\@s{0}
 We decompose completeness into Termination (liveness) and Precision (safety).
}
\@xx{}
\@x{}
\@y{\@s{0}
 Once again, Precision is an inverse version of the safety property in
 Completeness,
}
\@xx{}
\@x{}
\@y{\@s{0}
 as A \ensuremath{\.{\implies} B} is logically equivalent to
 \ensuremath{{\lnot}B \.{\implies}} \ensuremath{\sim}A.
}
\@xx{}
\@x{ PrecisionInv \.{\defeq}}
\@x{\@s{16.4} ( state \.{=}\@w{finishedFailure} )}
 \@x{\@s{24.38} \.{\implies} \.{\lor} {\lnot} BC {\bang} InTrustingPeriod (
 blockchain [ TRUSTED\_HEIGHT ] )}
\@y{\@s{0}
 outside of the trusting period
}
\@xx{}
\@x{\@s{39.94} \.{\lor} \E\, h \.{\in} {\DOMAIN} fetchedLightBlocks \.{:}}
 \@x{\@s{51.05} \.{\LET} lightBlock \.{\defeq} fetchedLightBlocks [ h ]
 \.{\IN}}
\@x{\@s{75.55}}
\@y{\@s{0}
 the full node lied to the lite client about the block header
}
\@xx{}
 \@x{\@s{59.25} \.{\lor}\@s{5.18} lightBlock . header \.{\neq} blockchain [ h
 ]}
\@x{\@s{75.55}}
\@y{\@s{0}
 the full node lied to the lite client about the commits
}
\@xx{}
 \@x{\@s{59.25} \.{\lor}\@s{5.18} lightBlock . Commits \.{\neq} lightBlock .
 header . VS}
\@pvspace{8.0pt}
\@x{}
\@y{\@s{0}
 the old invariant that was found to be buggy by \ensuremath{TLC
}}
\@xx{}
\@x{ PrecisionBuggyInv \.{\defeq}}
\@x{\@s{16.4} ( state \.{=}\@w{finishedFailure} )}
 \@x{\@s{24.38} \.{\implies} \.{\lor} {\lnot} BC {\bang} InTrustingPeriod (
 blockchain [ TRUSTED\_HEIGHT ] )}
\@y{\@s{0}
 outside of the trusting period
}
\@xx{}
\@x{\@s{39.94} \.{\lor} \E\, h \.{\in} {\DOMAIN} fetchedLightBlocks \.{:}}
 \@x{\@s{51.05} \.{\LET} lightBlock \.{\defeq} fetchedLightBlocks [ h ]
 \.{\IN}}
\@x{\@s{51.05}}
\@y{\@s{0}
 the full node lied to the lite client about the block header
}
\@xx{}
\@x{\@s{51.05} lightBlock . header \.{\neq} blockchain [ h ]}
\@pvspace{8.0pt}
\@x{}
\@y{\@s{0}
 the worst complexity
}
\@xx{}
\@x{ Complexity \.{\defeq}}
 \@x{\@s{16.4} \.{\LET} N \.{\defeq} TARGET\_HEIGHT \.{-} TRUSTED\_HEIGHT
 \.{+} 1 \.{\IN}}
\@x{\@s{16.4} state \.{\neq}\@w{working} \.{\implies}}
\@x{\@s{32.8} ( 2 \.{*} nprobes \.{\leq} N \.{*} ( N \.{-} 1 ) )}
\@pvspace{8.0pt}
\begin{lcom}{2.5}
\begin{cpar}{0}{F}{F}{0}{0}{}
We omit termination, as the algorithm deadlocks in the end.
 So termination can be demonstrated by finding a deadlock.
 Of course, one has to analyze the deadlocked state and see that
 the algorithm has indeed terminated there.
\end{cpar}
\end{lcom}
\@x{}\bottombar\@xx{}
\setboolean{shading}{false}
\begin{lcom}{0}
\begin{cpar}{0}{F}{F}{0}{0}{}
\ensuremath{\.{\,\backslash\,}\.{*}} Modification History
\end{cpar}
\begin{cpar}{0}{F}{F}{0}{0}{}
 \ensuremath{\.{\,\backslash\,}\.{*}} Last modified \ensuremath{Fri}
 \ensuremath{Jun} 26 12:08:28 \ensuremath{CEST} 2020 by \ensuremath{igor
}
\end{cpar}
\begin{cpar}{0}{F}{F}{0}{0}{}
 \ensuremath{\.{\,\backslash\,}\.{*}} Created \ensuremath{Wed}
 \ensuremath{Oct} 02 16:39:42 \ensuremath{CEST} 2019 by \ensuremath{igor
}
\end{cpar}
\end{lcom}

\end{document}